\newtheorem{theo}{Theorem}[section]
\newtheorem{lemma}{Lemma}[section]
\newtheorem{df}{Definition}[section]
\newtheorem{cor}{Corollary}[section]
\newtheorem{assump}{Assumption}[section]
\newtheorem{assert}{Assertion}[section]
\newtheorem{remark}{Remark}[section]
\newcommand{\bl}{\begin{lemma}}
\newcommand{\el}{\end{lemma}}
\newcommand{\be}{\begin{equation}}
\newcommand{\ee}{\end{equation}}
\newcommand{\beqn}{\begin{eqnarray}}
\newcommand{\eeqn}{\end{eqnarray}}
\newcommand{\bt}{\begin{theo}}
\newcommand{\et}{\end{theo}}
\newcommand{\bd}{\begin{df}}
\newcommand{\ed}{\end{df}}
\newcommand{\ba}{\begin{assump}}
\newcommand{\ea}{\end{assump}}
\newcommand{\bass}{\begin{assert}}
\newcommand{\eass}{\end{assert}}
\newcommand{\brem}{\begin{remark}}
\newcommand{\erem}{\end{remark}}
\newcommand{\bc}{\begin{cor}}
\newcommand{\ec}{\end{cor}}
\newcommand{\CA}{{\cal A}}
\newcommand{\QQ}{{\cal Q}}
\newcommand{\pt}{\tilde{p}}
\newcommand{\Rt}{\tilde{R}}
\newcommand{\St}{\tilde{S}}
\newcommand{\Ut}{\tilde{U}}
\newcommand{\etab}{\bm{\eta}}
\newcommand{\tetab}{\tilde{\bm{\eta}}}
\numberwithin{equation}{section}
\long\def\comment#1{}
\title{On node models for high-dimensional road networks}
\author[1,2]{Matthew A. Wright}
\author[2]{Gabriel Gomes}
\author[1,2]{Roberto Horowitz}
\author[2]{Alex A. Kurzhanskiy}
\affil[1]{Department of Mechanical Engineering, University of California, Berkeley}
\affil[2]{Partners for Advanced Transportation Technologies (PATH), University of California, Berkeley}
\begin{document}

\maketitle

\begin{abstract}
Macroscopic traffic models are necessary for simulation and study of traffic's complex macro-scale dynamics, and are often used by practitioners for road network planning, integrated corridor management, and other applications.
These models have two parts: a link model, which describes traffic flow behavior on individual roads, and a node model, which describes behavior at road junctions.
As the road networks under study become larger and more complex --- nowadays often including arterial networks --- the node model becomes more important.
Despite their great importance to macroscopic models, however, only recently have node models had similar levels of attention as link models in the literature.
This paper focuses on the first order node model and has two main contributions.
First, we formalize the multi-commodity flow distribution at a junction as an optimization problem with all the necessary constraints.
Most interesting here is the formalization of input flow priorities.
Then, we discuss a very common ``conservation of turning fractions'' or ``first-in-first-out'' (FIFO) constraint, and how it often produces unrealistic spillback.
This spillback occurs when, at a diverge, a queue develops for a movement that only a few lanes service, but FIFO requires that all lanes experience spillback from this queue.
As we show, avoiding this unrealistic spillback while retaining FIFO in the node model requires complicated network topologies.
Our second contribution is a ``partial FIFO'' mechanism that avoids this unrealistic spillback, and a (first-order) node model and solution algorithm that incorporates this mechanism.
The partial FIFO mechanism is parameterized through intervals that describe how individual movements influence each other, can be intuitively described from physical lane geometry and turning movement rules, and allows tuning to describe a link as having anything between full FIFO and no FIFO.
Excepting the FIFO constraint, the present node model also fits within the well-established ``general class of first-order node models'' for multi-commodity flows.
Several illustrative examples are presented.
\end{abstract}

{\bf Keywords}: macroscopic first order traffic model, first order node model,
multi-commodity traffic, dynamic traffic assignment, dynamic network loading

\section{Introduction}\label{sec_intro}
\defcitealias{csmps}{Caltrans, 2015}


Traffic simulation models are vital tools for traffic engineers and practitioners. As in other disciplines focusing on complex systems, such as climate or population dynamics, traffic models have helped to deepen our understanding of traffic behavior. They are widely used in transportation planning projects in which capital investments must be justified with simulation-based studies~\citepalias{csmps}. Recently, with the increased interest in Integrated Corridor Management (ICM) and Decision Support Systems (DSS), traffic models have also found a new role in real-time operations management. In both the planning and ICM/DSS contexts, the models have steadily grown larger - such as highway plans that append adjacent arterial networks or managed lanes~\citep{floridaDOT2013} - and more complex - such as integration of new ``smart'' vehicle and communications technologies into real-time ICM.

Traffic models are typically divided into three categories based on their level of abstraction. At the most granular level, microscopic models simulate the motion and behavior of individual vehicles. At the other extreme, macroscopic models describe the evolution of traffic flows and buildup and breakdown of congestion along the lineal direction of a road in aggregate terms. Mesoscopic traffic models occupy the intermediate space. Of the three, the highly-abstracted macroscopic models unsurprisingly have the lowest computational cost, which makes them well-suited for study of large and complex networks of roads.

A macroscopic model is said to consist of a \emph{link} model and a \emph{node} model.
A link model describes the evolution through time of the traffic flow along homogeneous sections of road.
Several types of link models exist, such as those that give link flows as functions of link densities (e.g., the widely-used cell transmission model of \citet{daganzo94} and its descendants), those that track the vehicles only at link boundaries \citep{yperman_link_2005}, and others (see \citet{nie_linkmodels_2005} for a broad overview).
For the purposes of this paper, we do not specify a particular class of link model, but only require that the model describe, as a function of its state at time $t$, both the amount of vehicles trying to exit the link ($S(t)$, the link's \emph{demand}) and the amount of vehicles that the link is able to accept from upstream ($R(t)$, the link's \emph{supply}).

A traffic model may contain multiple classes of vehicles that share the road, and each class may have their own demands.
Separate vehicle classes are often called \emph{commodities}.
In addition to the link and node models are the so-called \emph{turning} or \emph{split ratios}, which define the vehicles' turning choice at the junction --- the ratios of vehicles of each commodity that take each of the available movements.
These split ratios might be measured, such as by manually counting flows of each movement at traffic intersections, estimated from some model, or prescribed by the modeler to make vehicles follow a certain path.
In the context of this paper, we consider only the split ratios at particular junctions.
Nodes join the links, and the node model computes the set of flows through a node for each commodity as a function of its incoming links' demands and its outgoing links' supplies.

Of course, while macroscopic models may be fast relative to more granular meso- and microscopic models, their computational needs are affected with the growth of network size and complexity.
High computational cost can be exacerbated in modern ICM applications, as well.
Many real-time traffic state estimation techniques follow an \emph{ensemble method} approach, where many simulations describing different possible events are processed simultaneously (see e.g.~\citet{work2009trafficmodel},~\citet{wright_pf_2016}).
ICM decision-making can follow a similar approach, with multiple simulations being performed at a plan-evaluation step to project traffic outcomes under a range of possible future demands.

While the computational complexities stemming from links have been well-studied, node models can be sources of computational costs as well.
These costs emerge when one models a network with many multi-input and/or multi-output junctions, or junctions where more than two links enter or exit.
We adopt the term ``high-dimensional'' to describe these sorts of networks, to avoid ambiguity with similar terms such as ``large,'' which may also describe networks with many long roads and not many junctions.
This paper draws on the authors' experience in creating models for high-dimensional networks that describe a freeway, adjacent managed lanes (such as high-occupancy vehicle (HOV) lanes or tolled lanes), and/or the surrounding arterial grid for ICM purposes.
We will discuss in Section \ref{sec_node_review} how modern node models can exacerbate computational complexity in high-dimensional networks by creating a trade-off between model accuracy and number of links; thankfully, this trade-off can be overcome in a simple manner, as we will detail in Section~\ref{sec_node_model}.
Section \ref{sec_examples} gives two examples, one being a slight modification of a well-studied example from \citet{tampere11}, in which we demonstrate how our new node model considerations differ from previous node models in the junction-geometry information they consider and the flows produced.
Appendix~\ref{app_notation} summarizes the notation used in this paper.

\section{Common node models and their drawbacks}\label{sec_node_review}
\subsection{Node models}\label{subsec_node_review}
The traffic node problem is defined on a junction of $M$ input links, indexed by $i$,
and $N$ output links, indexed by $j$, with $C$ vehicle commodities, indexed by $c$.
As mentioned above, in first-order traffic models, the node model is said to
consist of the mechanism by which, at time $t$, incoming links' per-commodity demands $S_i^c(t)$, split ratios
$\beta_{i,j}^c(t)$ (which define the portion of vehicles of commodity $c$ in link $i$ that wish
to exit to link $j$), and outgoing
links' supplies $R_j(t)$ are resolved to produce throughflows $f_{i,j}^c(t)$.
Nodes are generally infinitesimally small and have no storage, so all the flow that enters the node at time $t$ must exit at time $t$.
To simplify the
notation, in the remainder of this paper we consider the node model evaluation in each time instant as an
isolated problem (as we are assuming no storage), and omit the variable $t$ for these quantities.

The node problem's
history begins with the original formulation of discretized first-order traffic flow
models~\citep{daganzo95a}. There have been many developments in the node model theory
since, but we will reflect only on some more recent results.

We can divide the node model literature into pre- and
post-\citet{tampere11} epochs. \citet{tampere11} drew from the literature several
earlier-proposed node model requirements to develop a set of necessary
conditions for first-order node models that they call the ``general class of first-order node models.''
To review, these conditions are:
\begin{enumerate}
	\item Applicability to general numbers of input links $M$ and
	output links $N$. In the case of multi-commodity flow, this requirement
	also extends to general numbers of commodities $c$.
	
	\item Maximization of the total flow through the node.
	Mathematically, this may be expressed as $\max \sum_{i,j,c} f_{i,j}^c$.
	According to~\citet{tampere11}, this means that
	``each flow should be actively restricted by one of the constraints,
	otherwise it would increase until it hits some constraint.''
	When a node model is formulated as a constrained optimization problem,
	its solution will automatically satisfy this requirement.
	However, what this requirement really means is that constraints should be
	stated \emph{correctly} and not be overly simplified and, thus,
	overly restrictive for the sake of convenient problem formulation.
	See the literature review in \citet{tampere11} for examples of node models
	that inadvertently do not maximize node throughput by oversimplifying their
	requirements. \label{item:maxFlow}
	
	\item Non-negativity of all input-output flows. Mathematically, $f_{i,j}^c \ge 0$
	for all $i,j,c$.
	
	\item Flow conservation: Total flow entering the node must be equal to total flow
	exiting the node. Mathematically, $\sum_i f_{i,j}^c = \sum_j f_{i,j}^c$ for all $c$.
	\item Satisfaction of demand and supply constraints. Mathematically, $\sum_j f_{i,j}^c \le S_i^c$ and $\sum_i f_{i,j}^c \le R_j$.
	
	\item Satisfaction of the first-in-first-out (FIFO) constraint: if a single
	destination $j$ for a given $i$ is not able to accept all demand
	from $i$ to $j$, all flows from $i$ are also constrained. This requirement is
	sometimes called ``conservation of turning fractions'' (CTF). Mathematically,
	$f_{i,j}^c / \sum_i f_{i,j}^c = \beta_{i,j}^c$.
	
	We believe that in some situations, the FIFO constraint may be too restrictive.
	It should not be completely eliminated, but rather, could be relaxed through
	a parametrization. We will revisit this point in the following Section.
	
	\item Satisfaction of the invariance principle.
	If the flow from some input link $i$ is restricted by the available output
	supply, this input link enters a congested regime.
	This creates a queue in this input link and causes its demand $S_i$ to jump
	to the link's capacity, which we denote $F_i$, in an infinitesimal time, and therefore,
	a node model should yield solutions that are invariant to replacing $S_i$ with
	$F_i$ when flow from input link $i$ is supply-constrained~\citep{lebacque05}.
\end{enumerate}

This paper concerns multi-commodity traffic, which is unaddressed in the preceding
list. To address multi-commodity traffic, we add another requirement,
\begin{enumerate}
\item[8.] Supply restrictions on a flow from any given input link are
imposed on commodity components of this flow proportionally to their 
per-commodity demands.
\end{enumerate}
Requirement 8 assumes that the commodities are mixed isotropically.
This means that all vehicles attempting to take movement $i,j$ will be queued in roughly random order, and not, for example, having all vehicles of commodity $c=1$ queued in front of all vehicles of $c=2$, in which case the $c=2$ vehicles would be disproportionally affected by spillback.
We feel this is a reasonable assumption for situations where the demand at the node is dependent mainly on the vehicles near the tail of the link (e.g., in a small cell at the tail).

In addition to the above numbered requirements, two other elements are required
to define a node model. The first is a rule for the portioning of output link supplies
$R_j$ among the input links. Following~\citet{gentile07}, in~\citet{tampere11} it was 
proposed to allocate supply for incoming flows proportionally to input link capacities $F_i$.

The second necessary element is a redistribution of ``leftover supply.'' Following the
initial partitioning of supplies $R_j$, if one or more of the supply-receiving input links
does not fill its allocated supply,
some rule must redistribute the difference to other input links who may still fill it.
This second element is meant to model the selfish behavior of drivers to take any space
available, and ties in closely with requirement \ref{item:maxFlow} above.
\citet{tampere11} referred to these two elements collectively as a ``supply
constraint interaction rule'' (SCIR).

Mentioned as an optional addition by \citet{tampere11} are ``node supply constraints.''
What is meant by these are supply quantities internal to the node, in addition
to link supplies $R_j$. These node supplies are meant to model \emph{shared resources}
that multiple movements may make use of: each movement $i,j$ through the node may
or may not consume
an amount of a node supply proportional to $\sum_c f_{i,j}^c$. A node supply being
fully consumed would constrain these flows in a manner analogous to the link
supplies. The obvious example of such a ``shared resource'' node supply is green
time at a signalized intersection; see \citet{tampere11} for further discussion
of this example. In \citet{corthout_non-unique_2012} it was noted that these 
node supplies may lead to non-unique solutions.
Very recently, \citet{jabari_node_2016} revisited this non-uniqueness problem,
introduced a relaxation of the flow-maximization objective termed a holding-free flow objective
(i.e., not all holding-free flows are flow-maximizing) and proposed adding constraints
concerning signal timing and the conflicts between movements (e.g., a through movement conflicts
with the opposing left turn movement and if traffic drives on the right side of the road and the two
cannot send vehicles at the same time) to achieve a unique flow solution.
We will not make use of node supplies in the remainder of this paper.

The \citet{tampere11} requirements invite the modeler to design supply portioning
and/or redistribution rules that, in some sense, model traffic behavior, add them 
to the numbered requirements, and create new node models. Recall that \citet{tampere11}
suggested portioning the supplies $R_j$ proportionally to the input links capacities $F_i$.
\citet{gibb_model_2011} suggested a supply portioning rule based on a so-called
``capacity-consumption equivalence,'' which states, qualitatively, that movements
into a supply-constrained output link $j'$ are characterized by drivers from individual
input links $i$ taking their movements $i,j'$ one after the other in turn. The supply
portioning rule thus says that the input links $i$, where drivers spend more time
waiting for entry into congested links $j$, are assigned less supply from all links,
including those not in congestion.

\citet{smits_family_2015} argue that the quantities of interest in the node model
problem should not be the flows $f_{i,j}^c$, but rather the quantities
$1/ ( \sum_c f_{i,j}^c)$, which are equal to the amount of time (relative to the
simulation timestep) that a vehicle needs to make movement $i,j$ through the node.
They claim that this quantity is measurable in the field, and good node models
would produce flows that emulate observed movement times. They review the node
models proposed in \citet{tampere11} and \citet{gibb_model_2011} in this
re-parametrization and present others with different supply portioning and
redistribution rules.

It is important to note that the node model in
\citet{gibb_model_2011} and the so-called ``equal delay at outlink'' model of
\citet{smits_family_2015} define their supply portioning and redistribution
rules \emph{implicitly}. A consequence is that their solutions must be found
through a fixed-point iterative procedure, whose iterations-until-convergence
are potentially unbounded (this point was originally raised by \citet{smits_family_2015}).
On the other hand, the capacity-proportional node model of
\citet{tampere11} and ``single server'' model of \citet{smits_family_2015} 
are examples of \emph{explicit} supply portioning rules.

For our purposes, we will focus on node models
with explicit supply portioning and redistribution rules (that is, an explicit SCIR).
Our SCIR, which will be introduced in Section \ref{sec_node_model}, is similar to that of 
\citet{tampere11} and others in that it assigns leftover supply to other input links that
can make use of it, but differs in what is considered ``leftover'' and how it is assigned.

While \citet{tampere11} used the capacity-proportional supply portioning rule,
we will slightly generalize this to admit arbitrary per-input link priorities
$p_i$ in the spirit of \citet{daganzo95a}, \citet{ni_simplified_2005}, and 
\citet{flotterod_operational_2011}.

\subsection{Implications of the FIFO flow rule}
\label{subsec_fifo_implications}
We noted in Section~\ref{subsec_node_review} that the FIFO, or ``conservation of
turning fractions'' rule in the~\citet{tampere11} node requirements may be
too restrictive. Consider the example junction depicted in Figure~\ref{fig-simo}.
It depicts a five-lane freeway with left and right off-ramps. The natural network representation
of this junction would have one node with one input link, with demand $S_1$,
and three output links, with supplies $R_1$, $R_2$, and $R_3$, similar to
Figure~\ref{fig:1to3}.

\begin{figure}[htb]
\centering
\includegraphics[width=4in]{./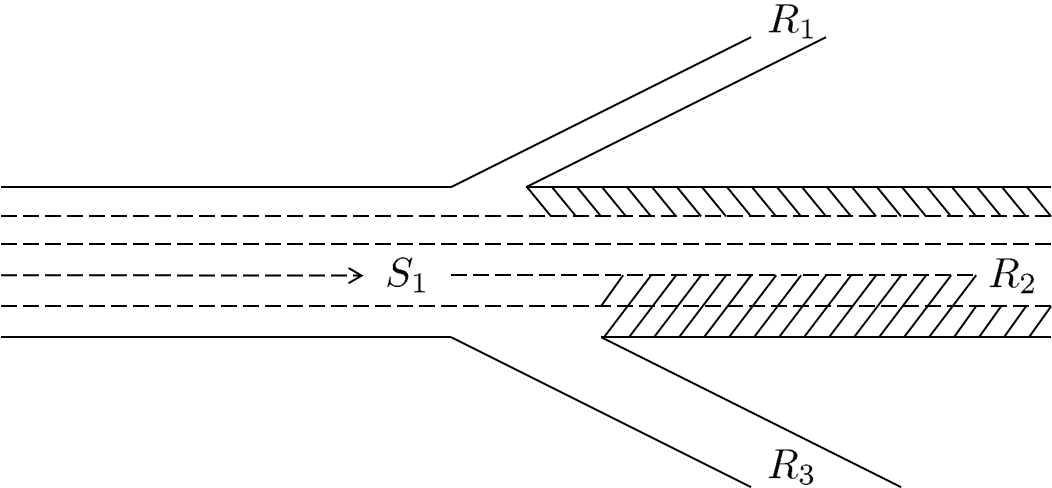}
\caption{A node with one input and three output links, where
congestion in output links~1 and~3 only partially affects
flow into output link~2, while congestion in link~2 affects
flows into output links~1 and~3 in full, and output links~1
and~3 do not affect each other.}
\label{fig-simo}
\end{figure}

\begin{figure}
\centering
	\begin{subfigure}[b]{.3\textwidth}
		\includegraphics[width=\textwidth]{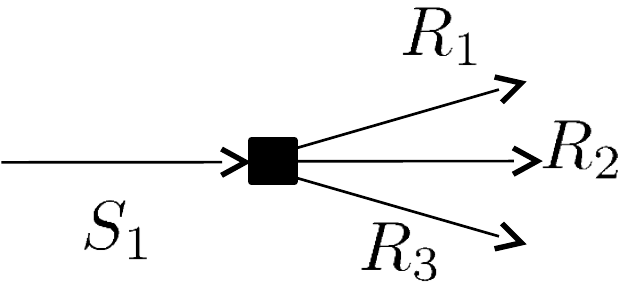}
		\caption{}
		\label{fig:1to3}
	\end{subfigure} \quad
	\begin{subfigure}[b]{.3\textwidth}
			\includegraphics[width=\textwidth]{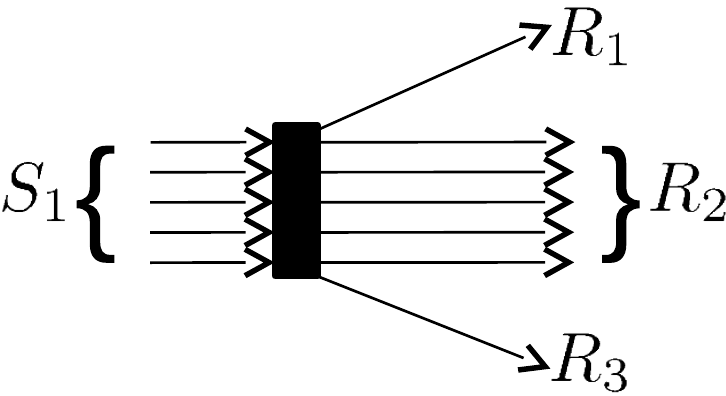}
			\caption{}
			\label{fig:seperateLanes}
	\end{subfigure} \quad
	\begin{subfigure}[b]{.3\textwidth}
			\includegraphics[width=\textwidth]{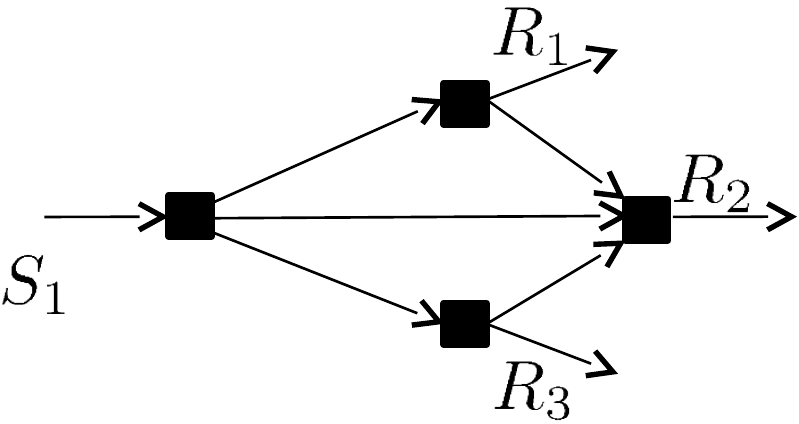}
			\caption{}
			\label{fig:intermediate}
	\end{subfigure}
	\caption{Three network representations of the junction in Figure~\ref{fig-simo}.}
	\label{fig:networks}
\end{figure}

Here, the FIFO constraint translates into
\begin{equation*}
\frac{f_{1,1}}{S_{1,1}} =
\frac{f_{1,2}}{S_{1,2}} =
\frac{f_{1,3}}{S_{1,3}},
\end{equation*}
where $f_{i,j} (S_{i,j})$ is the flow (demand) from link $i$ to $j$.
This equality means that if, for instance, off-ramp~1 is jammed
preventing traffic from coming in, 
there will be no flow of traffic to mainline~2 and off-ramp~3 either.
This is in contrast to the fact that if congestion on off-ramp~1 spilled back
onto the mainline, only one lane in Figure~\ref{fig-simo} would
be blocked (with some knock-on effects in the adjacent lane if drivers
were trying to enter the leftmost lane).
More generally, FIFO imposes that a jam in any of the output links will block the
flow to all output links.
Transportation engineers find that for certain junctions the
FIFO requirement is too strict to be realistic for this reason.

Some authors (e.g., \citet{bliemer07} or \citet{shiomi2015})
have suggested modeling each lane as a separate link to alleviate this
problem, similar to the network in Figure~\ref{fig:seperateLanes} (in
the Figure, the rectangular shape of the node signifies nothing, it 
is only stretched to fit all the input and output links). The demand $S_1$
and supply $R_2$ are split across the link lanes. This model
resolves the unrealistic spillback problem: congestion in off-ramps~1 or~3
will only spill back to the links that are sending demand to an off-ramp;
namely, the links representing the adjacent lanes. However, this approach
has its own drawbacks. First and most obviously, it greatly complicates the
size and dimensionality of the model, making every node in a two-or-more
lane road a merge and diverge. Second, it invites the question of how to
assign commodity traffic across link lanes, and provide a model for weaving or
lane-change behavior, questions that are abstracted away in the ``more macro''
macroscopic model of Figure~\ref{fig:seperateLanes}.

Figure~\ref{fig:intermediate} shows an intermediate approach --- the sending link
$S_1$ in Figure~\ref{fig-simo} is split upstream of the off-ramps, at the leftmost node
in Figure~\ref{fig:intermediate}.
The top and bottom links exiting this leftmost node represent the left and right lanes in
Figure~\ref{fig-simo}, and carry all exiting vehicles, as well as the vehicles that
take the left or right lanes but do not exit. The middle link then represents the middle
three lanes. The non-exiting vehicles rejoin the same link at the rightmost node.
This model limits the problems raised by separating each lane into a separate link,
as in the network of Figure~\ref{fig:seperateLanes}.
This approach, however, does not fully eliminate the unrealistic spillback problem, as heavy
spillback that congests the ``left lane'' or ``right lane'' link will still spill back
into link $S_1$. Splitting the link apart earlier would give us more room to avoid
the unrealistic spillback, but if we go too far upstream we can end up back at the network of
Figure~\ref{fig:seperateLanes}.

At this point it may appear that we have found ourselves modeling between Scylla and Charybdis: keeping the network at manageable simplicity invites unrealistic spillback behavior, while preventing the overly aggressive spillback causes the network to explode in size. In Section \ref{sec_node_model} we present a method to overcome this dilemma, and will revisit this example in the context of our proposed solution in Section \ref{subsec_simo}.
Our method is relatively simple: it relies only on the lane geometry in junctions, as in Figure~\ref{fig-simo}.
It is also general and could be used with any sort of network configuration,
for example, those of Figures~\ref{fig:seperateLanes} and~\ref{fig:intermediate}.

\section{A node model for dimensionality management}\label{sec_node_model}
Our notation for the node model problem has been introduced above.
For a quick reference, we refer the reader to Appendix \ref{app_notation}.
%

This Section is organized as follows.
First, we describe the merge problem --- the multi-input-single-output
(MISO) node in Section~\ref{subsec_miso} to explain the concept of input 
link priorities.
Then, we propose a way of relaxing the FIFO
condition and explain it in the case of a single-input-multi-output (SIMO) node
in Section~\ref{subsec_simo}.
Finally, in Section~\ref{subsec_mimo_relaxed} we proceed to the
general multi-input-multi-output (MIMO) node putting all the concepts
together.

\subsection{Multiple-Input-Single-Output (MISO) Node: Explaining Input Priorities}\label{subsec_miso}
We start by considering a node with $M$ input links and 1 output link.
The number of vehicles of type $c$ that input link $i$ wants to send is $S_i^c$.
The flow entering from input link $i$ has priority $p_i\geq 0$.
Here $i=1,\dots,M$ and $c=1,\dots,C$.
The output link can receive $R_1$ vehicles.

Let us briefly discuss the meaning of priorities.
One intuitive way to understand them is that qualitatively speaking, $p_i$, the priority of link $i$, can be thought of the ability of link $i$ to ``claim'' the supplies of the downstream links.
If we consider the node problem as describing a scenario where, over some period of time, drivers from each incoming link ``compete'' to claim downstream supply, then an intuitive understanding of priorities is that $p_i$ is proportional to the rate at which drivers are able to leave link $i$ and claim downstream supply.
We explore this idea further in a companion paper, \citet{wright_node_dynamic_2016}, where the node problems considered in this paper are explicitly recast as dynamic systems governed by ordinary differential equations.
Note also that this ``rate of leaving'' characterization of priorities was mentioned by \citet{tampere11}, in their justification of using capacities in place of priorities (their assumption being that vehicles will leave link $i$ at a rate equal to its capacity).
However, this characterization of priorities as flow rates is not meant to restrict the selection of priorities to flow rates.
What is important is the ratio of a $p_i$ to the other input links' priorities.
If, for example, a link 1 is to have twice the priority of a link 2, a priority set of $\{ 2000, 1000 \}$ (i.e., on the scale of flow rates in terms of veh/hour) will produce the same results as a priority set of $\{\sfrac{2}{3}, \sfrac{1}{3}\}$.

Casting the computation of input-output flows $f_{i1}^c$ as a mathematical
programming problem, we arrive at:
\be
\max\left(\sum_{i=1}^M\sum_{c=1}^C f_{i,1}^c\right), \label{miso_objective}
\ee
subject to:
\begin{eqnarray}
& & f_{i,1}^c \geq 0, \;\; i=1,\dots,M, \; c=1,\dots,C \;\;
\mbox{ --- non-negativity constraint};
\label{miso_nonnegativity_constraint} \\
& & f_{i,1}^c \leq S_i^c, \;\; i=1,\dots,M, \; c=1,\dots,C \;\;
\mbox{ --- demand constraint};
\label{miso_demand_constrint} \\
& & \sum_{i=1}^M f_{i,1} \leq R_1 \;\;
\mbox{ --- supply constraint};
\label{miso_supply_constraint} \\
& & \frac{f_{i,1}^c}{f_{i,1}} =
\frac{S_{i}^c}{S_{i}}, \;\;
i = 1,\dots,M,\; c=1,\dots,C,
\;\; \mbox{ --- proportionality constraint for commodity flows};
\label{miso_proportionality_constraint}\\
& & \left.\begin{array}{cl}
\mbox{(a)} &
p_{i''} f_{i',1} = p_{i'} f_{i'',1} \;\;
\forall i',i'', \mbox{ such that }
f_{i',1}<S_{i'}, \;
f_{i'',1}<S_{i''}, \\
\mbox{(b)} & 
\mbox{If } f_{i,1} < S_i, \mbox{ then }
f_{i,1} \geq \frac{p_i}{\sum_{i'=1}^M p_{i'}} R_1. 
\end{array}\right\} \;\; \mbox{ --- priority constraint}.
\label{miso_priority_constraint}
\end{eqnarray}
where $i'$ and $i''$ in \eqref{miso_priority_constraint} denote two
generic input links.
Constraint~\eqref{miso_proportionality_constraint} defines how 
potential restrictions imposed on full input-output flows are
translated to individual commodities.
In this paper we assume that any restriction on the
$i$-to-$1$ flow applies to commodity flows proportionally to
their contribution to the demand, in accordance with requirement 8
introduced in Section \ref{subsec_node_review}.

Let us discuss the priority constraint~\eqref{miso_priority_constraint}
in more detail.
It should be interpreted as follows.
Input links $i=1,\dots,M$, fall into two categories:
(1) those whose flow is restricted by the allocated supply of
the output link; and
(2) those whose demand is satisfied by the supply allocated for them
in the output link.
Priorities define how supply in the output link is allocated for input flows.
Condition~\eqref{miso_priority_constraint}(a)
says that flows from input links of category 1 are allocated
proportionally to their priorities.
Condition~\eqref{miso_priority_constraint}(b)
ensures that input
flows of category 1 get \emph{no less} than the portion of supply
allocated for them based on their priorities.
The inequality
in condition~\eqref{miso_priority_constraint}(b) becomes an
equality when category 2 is empty.

There is a special case when some input link priorities are
equal to $0$.
If there exists an input link $\hat{\imath}$ such that
$p_{\hat{\imath}}=0$, while $f_{\hat{\imath},1}>0$, then,
due to condition~\eqref{miso_priority_constraint}(a),
all input links with nonzero priorities are in category 2.
Thus, if category 1 contains only input links with zero priorities,
one should evaluate condition~\eqref{miso_priority_constraint}(a)
with arbitrary positive, but \emph{equal}, priorities: $p_{i'}=p_{i''}>0$.

If the priorities $p_i$ are proportional to the demands
$S_i$, $i=1,\dots,M$,\footnote{Here we assume that
$S_i >0$, $i=1,\dots,M$.} 
then condition~\eqref{miso_priority_constraint} can be written as an equality
constraint:
\be
\frac{f_{1,1}}{S_1} = \dots =
\frac{f_{i,1}}{S_i} = \dots =
\frac{f_{M,1}}{S_M},
\label{miso_demand_proportional}
\ee
and the optimization
problem~\eqref{miso_objective}-\eqref{miso_priority_constraint}
turns into a \emph{linear program} (LP).

\begin{remark}
\label{remark:demandpriorities}
When selecting priorities $p_i$, they should be chosen such that they are \emph{not} functions of the demands $S_i$.
Otherwise, it can be shown that the resulting flows can violate the invariance principle (requirement 7 in Section \ref{subsec_node_review}) \citep{tampere11}.
This is true for all priorities considered in this paper.
For more detail on this, see the discussion in Section 2.1.3 of \citet{tampere11} and the references therein.
\end{remark}

\begin{remark}
Note that constraint~\eqref{miso_demand_proportional} cannot be trivially extended
to the MIMO node by replacing subindex $1$,
where it denotes the output, with subindex $j$.
Doing that, as is evident from~\cite{bliemer07}, indeed leads to a convenient
optimization problem formulation,
but sacrifices the flow maximization objective of the
node model, as was pointed out in~\citet{tampere11},
reducing the feasibility set more than necessary.
\end{remark}

For arbitrary priorities 
with $M=2$, condition~\eqref{miso_priority_constraint} becomes:
\begin{eqnarray}
& & f_{1,1} \leq
\max\left\{\frac{p_1}{p_1+p_2}R_1,\; R_1- S_2\right\};
\label{miso_2in_priority_constraint_1} \\
& & f_{2,1} \leq
\max\left\{\frac{p_2}{p_1+p_2}R_1,\; R_1- S_1\right\}.
\label{miso_2in_priority_constraint_2}
\end{eqnarray}
To give a hint how more complicated
constraint~\eqref{miso_priority_constraint} becomes
as $M$ increases,
let us write it out for $M=3$:
\be
f_{1,1} \leq
\max\left\{
\frac{p_1}{\sum_{i=1}^3 p_i}R_1,
\frac{p_1}{p_1+p_2}\left(R_1-S_3\right), 
\frac{p_1}{p_1+p_3}\left(R_1-S_2\right), 
R_1-\sum_{i=2,3} S_i
\right\}; \label{miso_3in_priority_constraint_1} \\
\ee
\be
f_{2,1} \leq
\max\left\{
\frac{p_2}{\sum_{i=1}^3 p_i}R_1,
\frac{p_2}{p_1+p_2}\left(R_1- S_3\right), 
\frac{p_2}{p_2+p_3}\left(R_1- S_1\right), 
R_1-\sum_{i=1,3} S_i
\right\}; \label{miso_3in_priority_constraint_2}
\ee
\be
f_{3,1} \leq
\max\left\{
\frac{p_3}{\sum_{i=1}^3 p_i}R_1,
\frac{p_3}{p_1+p_3}\left(R_1- S_2\right), 
\frac{p_3}{p_2+p_3}\left(R_1- S_1\right), 
R_1-\sum_{i=1,2} S_i
\right\}.\label{miso_3in_priority_constraint_3}
\ee
As we can see, right hand sides of 
inequalities~\eqref{miso_3in_priority_constraint_1}-\eqref{miso_3in_priority_constraint_3}
contain known quantities, and so for arbitrary priorities,
problem~\eqref{miso_objective}-\eqref{miso_priority_constraint} is also an LP.
For general $M$, however, building constraint~\eqref{miso_priority_constraint}
requires a somewhat involved algorithm.
Instead, we develop an algorithm for computing input-output flows
$f_{i1}^c$ that solves the maximization
problem~\eqref{miso_objective}-\eqref{miso_priority_constraint}.
For the special case of the MISO node, our solution algorithm is similar to the
solution algorithm of \citet{tampere11} (again, with a generalization of priorities
to be arbitrary quantities rather than capacities), so in the interest of brevity we
defer its statement to Appendix \ref{app_merge}.

The following theorem summarizes this section.
The proof will follow in Section~\ref{subsec_mimo_relaxed}, where we
prove a more general statement that has this problem as a special case.
\begin{theo}
The input-output flow computation algorithm of Appendix \ref{app_merge} constructs the
unique solution of the maximization
problem~\eqref{miso_objective}-\eqref{miso_priority_constraint}.
\label{theo_miso_optimal}
\end{theo}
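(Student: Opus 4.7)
The plan is to verify three properties of the algorithm's output: feasibility with respect to constraints (\ref{miso_nonnegativity_constraint})--(\ref{miso_priority_constraint}), optimality of the objective (\ref{miso_objective}), and uniqueness of the optimal feasible flow. Because the theorem states that the proof is subsumed by the MIMO argument of Section~\ref{subsec_mimo_relaxed}, I would structure the MISO proof so that every step admits a direct MIMO analogue once partial-FIFO couplings across outputs are introduced.

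For feasibility, I would walk through the Appendix~\ref{app_merge} algorithm and check the constraints as loop invariants. Non-negativity, the per-commodity demand bound, and the aggregate supply bound hold by construction; the commodity-proportionality condition (\ref{miso_proportionality_constraint}) holds because the algorithm first fixes aggregate flows $f_{i,1}$ and only then splits them by the ratio $S_i^c/S_i$. The priority constraint requires more care: at termination the algorithm has partitioned the input links into a \emph{demand-satisfied} set on which $f_{i,1}=S_i$ and a \emph{supply-constrained} set on which $f_{i,1}$ equals the priority-proportional share of the residual supply $R_1 - \sum_{\text{satisfied}} S_i$. The explicit proportional allocation then gives condition (\ref{miso_priority_constraint})(a), while the algorithm's stopping rule guarantees that every demand-satisfied link has $S_i$ no smaller than its final priority-proportional share, giving (\ref{miso_priority_constraint})(b).

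Optimality is straightforward: any feasible flow satisfies $\sum_{i}f_{i,1}\le \min(\sum_i S_i,\,R_1)$ by (\ref{miso_demand_constrint}) and (\ref{miso_supply_constraint}), and the algorithm terminates only when either all demand has been accepted or the remaining supply has been exhausted by the priority-proportional allocation, so this bound is attained.

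Uniqueness is the step I expect to be the main obstacle. The approach is as follows. Given any optimal feasible solution, let $c^\star$ denote the common value of $f_{i,1}/p_i$ over the supply-constrained input links, which exists and is well-defined by (\ref{miso_priority_constraint})(a). Condition (\ref{miso_priority_constraint})(b) then forces $S_i/p_i\ge c^\star$ for every demand-satisfied link, while supply-constrained links satisfy $S_i/p_i > c^\star$ (otherwise their full demand would already be served). Sorting the inputs by ascending $S_i/p_i$ therefore pins down the partition up to ties at the threshold $c^\star$ — and on those ties the flow value is identical in both categories, so $f_{i,1}$ is unambiguous. The threshold $c^\star$ itself is fixed by the supply-exhaustion relation $\sum_{\text{cat.\ 1}} p_i c^\star + \sum_{\text{cat.\ 2}} S_i = R_1$, and commodity-level uniqueness follows from (\ref{miso_proportionality_constraint}). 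The zero-priority subtlety noted after (\ref{miso_priority_constraint}) is handled exactly as prescribed there: any active link with $p_i=0$ is placed in the supply-constrained set with the equal-priority convention applied among zero-priority links, which does not disturb the preceding argument because the demand-satisfied links are still characterized by $S_i/p_i \ge c^\star$ for whatever positive $c^\star$ the nonzero-priority links determine.
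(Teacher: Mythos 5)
Your route differs from the paper's: the paper never proves Theorem~\ref{theo_miso_optimal} directly, but instead shows (Lemma~\ref{lemma_mimo}) that the Appendix~\ref{app_merge} algorithm is the $N=1$ specialization of the MIMO algorithm and then invokes Theorem~\ref{theo_mimo_relaxed_optimal}, whose proof argues iteration by iteration that every assigned flow has at least one tight constraint. Your direct, self-contained treatment --- feasibility as loop invariants, optimality from the bound $\sum_i f_{i,1}\le\min\left(\sum_i S_i,\,R_1\right)$ being attained, then a separate uniqueness step --- is more explicit than what the paper offers, and the feasibility and optimality parts are essentially sound.

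The gap is in uniqueness. You assert that condition~\eqref{miso_priority_constraint}(b) forces $S_i/p_i\ge c^\star$ for every demand-satisfied link; but (b) is an implication whose hypothesis is $f_{i,1}<S_i$, so it constrains only supply-constrained links, and the inequality your ascending-$S_i/p_i$ sorting story actually needs is the reverse one, $S_i/p_i\le c^\star$. More seriously, that reverse inequality does not follow from constraints~\eqref{miso_nonnegativity_constraint}--\eqref{miso_priority_constraint} either, so the partition into categories is not pinned down the way your argument requires. Concretely, take $M=3$, $C=1$, $p_1=p_2=p_3=1$, $S=(3,6,100)$, $R_1=14$: the algorithm returns $f=(3,\,5.5,\,5.5)$, but $f=(3,\,6,\,5)$ also satisfies every constraint --- (a) is vacuous because only link~3 has $f_{i,1}<S_i$, and (b) reads $5\ge 14/3$ --- and achieves the same objective value $14$, with link~2 demand-satisfied despite $S_2/p_2=6>c^\star=5$. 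So the category membership of a link is ambiguous under the written constraints, and uniqueness cannot be extracted from them alone; closing the argument requires either strengthening the priority constraint (e.g., requiring category-1 links to split, proportionally to priorities summed over category~1 only, exactly the supply left over by category-2 links, which is what the algorithm implicitly enforces through its reallocation loop) or defining the solution operationally via the algorithm, which is in effect what the paper's proof of Theorem~\ref{theo_mimo_relaxed_optimal} does. Your threshold idea correctly describes the algorithm's output, but as written it does not establish that every optimal feasible point has that structure.
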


\subsection{Single-Input-Multiple-Output (SIMO) Node: Relaxing the FIFO Condition}
\label{subsec_simo}
Recall our discussion of Figures~\ref{fig-simo} and~\ref{fig:networks}, and
the unrealistic spillback problems created by the FIFO condition.
A more realistic model would allow to specify how output
flows at any given junction can affect each other.
In the context of our example, we could say that 
congestion in off-ramp~1 affects mainline flow in lane~1,
congestion in off-ramp~3 affects mainline flow in lanes~4 and~5
(see striped areas in Figure~\ref{fig-simo}), while
traffic flow in lanes~2 and~3 of the mainline output link is
not affected by the traffic state in output links~1 and~3.
Additionally, we can say that congestion in the mainline output~2
affects flows directed to both off-ramps in full, 
while the traffic state in each of the off-ramps does not influence
the flow directed to the other off-ramp.

For purposes that will become clear in a moment, we refer to this partial blocking in terms of intervals of the lane group (e.g., the leftmost fifth of the lane group), rather than specific lanes (e.g., lane 1).
For example, referring again to Figure \ref{fig-simo}, off-ramp 1 becoming congested would affect the leftmost fifth of the mainline flow, and off-ramp 3 becoming congested would affect the rightmost two-fifths of the mainline flow.
Further, if we parameterize the five lanes serving the mainline flow as a unit interval, $[0,1]$, then we could say that off-ramp 1 becoming congested affects the interval $[0, \frac{1}{5}]$, and off-ramp 3 becoming congested affects the interval $[\frac{3}{5}, 1]$.

To formally describe this,
we introduce \emph{mutual restriction intervals}
$\etab_{j',j}= [y,z]\subseteq[0,1]$.
These parameters can be interpreted as follows:
\begin{itemize}
\item $\etab_{j',j}=[0,1]$ --- congestion in the output link $j'$
affects flow directed to the output link $j$ in full.
This is equivalent to the FIFO condition.
Obviously, $\etab_{j,j}\equiv[0,1]$.
\item $\etab_{j',j}=[0,0]$ (or any other interval of zero length) --- traffic
state in the output link $j'$
does not influence the flow directed to the output link $j$.
This is equivalent to no FIFO restriction.
\item $\etab_{j',j}=[y,z]\subset [0,1]$ --- traffic state in the output link
$j'$ affects a $|\etab_{j',j}|=z-y$ portion of the flow directed to the output
link $j$.
Moreover, we specify this influence as an interval, not just a scalar,
to capture the summary effect of multiple output links that may restrict
flow to the output link $j$ --- how, will be explained shortly.
\end{itemize}

Recall again the node in Figure~\ref{fig-simo}.
Here, the mutual restriction intervals, in matrix form, may be written as
\begin{equation*}
\etab = \left(\begin{array}{ccc}
[0, 1] & \left[0, \frac{1}{5}\right] & [0, 0] \\

[0, 1] & [0, 1] & [0, 1] \\

[0, 0] & \left[\frac{3}{5}, 1\right] & [0, 1]
\end{array}\right),
\end{equation*}
where $\etab_{j',j}$ is the $j',j$ element of the matrix.
The diagonal ($j=j'$) elements of the matrix, $\etab_{j,j}$, are always $[0,1]$ because they indicate that a movement becomes fully restricted when its destination link is blocked.
Note that the matrix is not symmetric in general, since a queue in link $j$ will typically not affect the same portion $[y,z]$ of $j'$-serving lanes as a queue in link $j'$ will affect the $j$-serving lanes (for example, $\etab_{2,1}= [0,1]$ in the above matrix because a queue in all five freeway lanes implies a queue in the leftmost freeway lane that serves offramp 1).

We have established how we encode the spatial (i.e., portion of lanes) extent of restriction into the restriction intervals. Now, we discuss how we use the restriction intervals to compute flows.
Recalling our example, suppose $\frac{R_1}{S_{1,1}}<\frac{R_3}{S_{1,3}}<1\leq\frac{R_2}{S_{1,2}}$.
In other words, demand for output links~1 and~3 exceeds the available supply
with output link~1 being more restrictive, while the demand directed
to the output link~2 can be satisfied.
Since output links~1 and~3 do not affect each other, we get
\begin{equation*}
f_{1,1} = R_1 \;\;\; \mbox{ and } \;\;\; f_{1,3} = R_3.
\end{equation*}
Flow $f_{1,2}$ is partially restricted by both output links~1 and~3:
\begin{equation*}
f_{1,2} = \left(1 - |\etab_{1,2}| - \left| \etab_{3,2} \setminus (\etab_{3,2} \cap \etab_{1,2}) \right|\right)S_{1,2} +
\frac{R_1}{S_{1,1}}|\etab_{1,2}|S_{1,2} +
\frac{R_3}{S_{1,3}}|\etab_{3,2} \setminus (\etab_{3,2} \cap \etab_{1,2})|S_{1,2},
\end{equation*}
where $|\etab_{1,2}|$ and $|\etab_{3,2}\setminus (\etab_{3,2} \cap \etab_{1,2})|$ denote the lengths of intervals
$\etab_{1,2}$ and $\etab_{3,2}\setminus (\etab_{3,2} \cap \etab_{1,2})$, respectively.
In the second interval, the intersection of $\etab_{3,2}$ and $\etab_{1,2}$ is removed because, as output link 1 is more restrictive by assumption, 
the output 3 will restrict $f_{1,2}$ only when output 1 is already restricting $f_{1,2}$ (of course, in this example, the intersection is the empty set, but as we will see momentarily, that is not always the case).
In this expression for $f_{1,2}$, the first term represents the unrestricted portion of flow (lanes~2 and~3);
the second term represents the portion of flow restricted by the output~1
(lane~1); and
the third term represents the portion of flow restricted by the output~3
(lanes~4 and~5).
The expression for $f_{1,2}$ can be rewritten as
\begin{equation}
f_{1,2} = S_{1,2} - 
\left(1 - \frac{R_1}{S_{1,1}}\right)|\etab_{1,2}|S_{1,2} -
\left(1 - \frac{R_3}{S_{1,3}}\right)|\etab_{3,2} \setminus (\etab_{3,2} \cap \etab_{1,2})|S_{1,2}. \label{eq:relaxed_fifo_flow_example}
\end{equation}
We believe \eqref{eq:relaxed_fifo_flow_example} is an intuitive representation of how the mutual restriction intervals affect flows.
The first term, $S_{1,2}$, is the maximum possible flow (i.e., the demand).
The second and third terms represent the portion of demand that cannot be fulfilled due to the restriction intervals.
For both of them, $(1- R_j / S_{1,j})$ represents the portion of flow that is affected by output link $j$ becoming congested, and $|\cdot|$, the length of the mutual restriction interval, is the degree to which this flow portion is affected by relaxed, or partial, FIFO.

Further, the computation in \eqref{eq:relaxed_fifo_flow_example} can be intuitively represented in the two-dimensional graphics of Figure \ref{fig-f12}.
In both subfigures, the area of the striped region is the flow $f_{1,2}$.
The entire rectangle, consisting of the region $[0, S_{1,2}]$ on the horizontal axis and ${0,1}$ on the vertical axis, has the area $S_{1,2}$ and represents the maximum possible flow.
The shaded regions correspond to the flow reductions in \eqref{eq:relaxed_fifo_flow_example}.

Focusing first on Figure \ref{fig-f12}(a), the top shaded region represents the flow reduction caused by congestion in output link 1.
One can see that its extent on the vertical axis is $\etab_{1,2} = [0, \frac{1}{5}]$, and its extent on the horizontal axis is $(1 - R_1 / S_{1,1}) S_{1,2} = (f_{1,1}/S_{1,1})S_{1,2}$.
The area of this shape, then, is $(1 - R_1 / S_{1,1}) |\etab_{1,2}| S_{1,2}$, which appears in \eqref{eq:relaxed_fifo_flow_example}.
Computing \eqref{eq:relaxed_fifo_flow_example}, then, is equivalent to measuring the shaded area in Figure \ref{fig-f12}(a).


\begin{figure}[htb]
\centering
\includegraphics[width=6in]{./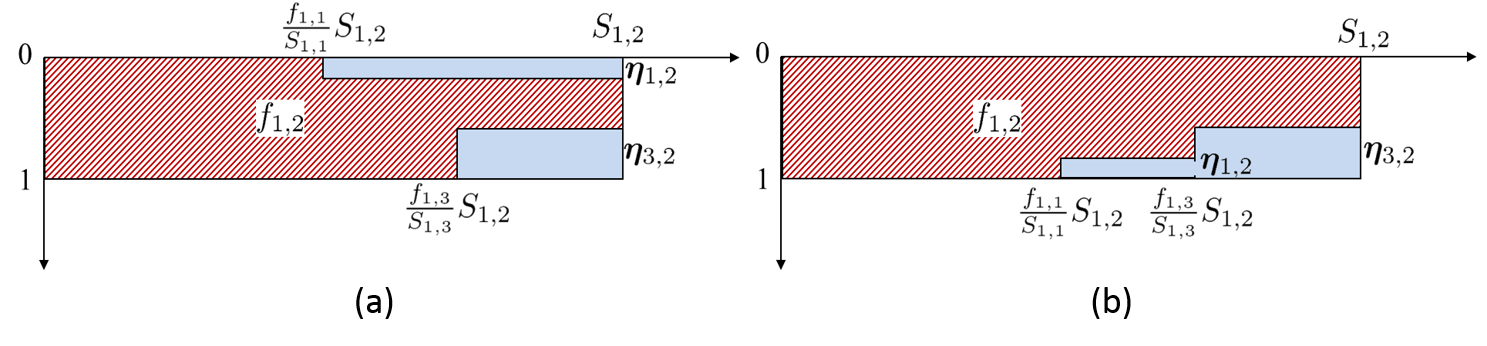}
\caption{Computing $f_{1,2}$ for the 1-input-3-output node example
with $\frac{R_1}{S_{1,1}}<\frac{R_3}{S_{1,3}}<1\leq\frac{R_2}{S_{1,2}}$,
$\etab_{13}=\etab_{31}=[0,0]$, $\etab_{32}=\left[\frac{3}{5}, 1\right]$
and two cases for $\etab_{1,2}$:  
(a) $\etab_{1,2}=\left[0, \frac{1}{5}\right]$;
(b) $\etab_{1,2}=\left[\frac{4}{5}, 1\right]$.}
\label{fig-f12}
\end{figure}

Now, considering both Figures \ref{fig-f12}(a) and \ref{fig-f12}(b), we see why we subtracted the intersection in the third term of \eqref{eq:relaxed_fifo_flow_example}.
In Figure \ref{fig-f12}(b), we have $\etab_{1,2} = \left[\frac{4}{5}, 1\right]$.
Here, restrictions from outputs~1 and~3 imposed on flow $f_{1,2}$ would
overlap in their affected lanes.
The intuition is that if traffic in lane~5 of the mainline output is
already restricted by the output~1, then the output~3 cannot do anything
more to restrict the flow in lane~5, it can only restrict flow in lane~4
of the mainline output.
Since the intersection in Figure \ref{fig-f12}(a)'s problem is the empty set, \eqref{eq:relaxed_fifo_flow_example} gives the area of its shaded area as well; \eqref{eq:relaxed_fifo_flow_example} is the proper equation for both cases.

We now generalize from two restriction intervals to an arbitrary number.
Define:
\begin{equation}
\QQ_{j',j} = \etab_{j',j}\times \left[\frac{f_{1,j'}}{S_{1,j'}}S_{1,j}, S_{1,j}\right],
\label{eq-rectangle-q-simo}
\end{equation}
where `$\times$' denotes a Cartesian product.
The Cartesian product of the two intervals gives us
a rectangle $\QQ_{j',j}$.
For example, in Figure~\ref{fig-f12}, $\QQ_{1,2}$ and $\QQ_{3,2}$ represent the grayed-out
rectangles.

Denote $\CA(\cdot)$ as the area of a two-dimensional shape.
For example, the expression for flow $f_{1,2}$
in our example, \eqref{eq:relaxed_fifo_flow_example}, can be replaced with a single, more general formula:
\begin{equation*}
f_{1,2} = S_{1,2} - \CA\left(\QQ_{1,2}\cup\QQ_{3,2}\right).
\end{equation*}

Now, we are ready to formulate the optimization problem for the 
general SIMO node with $N$ output links and $C$ commodities:
\be
\max\left(\sum_{j=1}^N\sum_{c=1}^C f_{1,j}^c\right),
\label{simo_objective}
\ee
subject to:
\begin{eqnarray}
& & f_{1,j}^c \geq 0, \;\; j=1,\dots,N, \; c=1,\dots,C \;
\mbox{ --- non-negativity constraint};
\label{simo_nonnegativity_constraint} \\
& & f_{1,j}^c \leq S_{1,j}^c, \;\; j=1,\dots,N, \; 
c=1,\dots,C \; \mbox{ --- demand constraint};
\label{simo_demand_constraint} \\
& & f_{1,j} \leq R_j, \;\; j=1,\dots,N \;
\mbox{ --- supply constraint};
\label{simo_supply_constraint} \\
& & \frac{f_{i,j}^c}{f_{1,j}} =
\frac{S_{1,j}^c}{S_{1,j}}, \;\;
j=1,\dots,N, \; c=1,\dots,C \;\; \mbox{ --- proportionality constraint for}\nonumber \\
& & \mbox{commodity flows};
\label{simo_proportionality_constraint}\\
& & f_{1,j} \leq S_{1,j} - \CA\left(\bigcup_{j'\neq j}\QQ_{j',j}\right), \;\;
j=1, \dots, N \;
\mbox{ --- relaxed FIFO constraint}.
\label{simo_rfifo_constraint}
\end{eqnarray}
For SIMO nodes with full FIFO, constraint~\eqref{simo_rfifo_constraint}
together with the supply constraint~\eqref{simo_supply_constraint}
translates into
\begin{equation}
f_{1,j} \leq S_{1,j} - \left(1 - \frac{f_{1,j^\ast}}{S_{1,j^\ast}}\right)S_{1,j} =
\frac{f_{1,j^\ast}}{S_{1,j^\ast}}S_{1,j},
\label{simo_fifo_constraint}
\end{equation}
where
\be
j^\ast = \arg\min_j\frac{R_{j^\ast}}{S_{1,j^\ast}},
\label{simo_restrictive_output}
\ee
and, since we are solving the flow maximization problem,
\eqref{simo_fifo_constraint} can be replaced with the equality constraint:
\begin{equation}
f_{1,j} = \frac{f_{1,j^\ast}}{S_{1,j^\ast}}S_{1,j}.
\label{simo_fifo_constraint_eq}
\end{equation}

For SIMO nodes with no FIFO, $\CA\left(\bigcup_{j'\neq j}\QQ_{j',j}\right)=0$,
which simplifies~\eqref{simo_rfifo_constraint} to the demand constraint,
and thus, constraint~\eqref{simo_rfifo_constraint} can be omitted in that case.

Next, we present the algorithm for solving the flow
maximization problem~\eqref{simo_objective}-\eqref{simo_rfifo_constraint}.
\begin{enumerate}
\item Initialize:
\begin{eqnarray*}
\St_{1,j}^c(0) & := & S_{1,j}^c; \\
\St_{1,j}(0) & := & S_{1,j};\\
\tetab_j(0) & = & [0, 0]; \\
V(0) & := & \{1, \dots, N\}; \\
k & := & 0; \\
& & j=1, \dots, N, \;\;\; c = 1, \dots, C.
\end{eqnarray*}
As before, $k$ denotes the iteration index;
$\St_{1,j}^c(k)$ ($\St_{1,j}(k)$) is the oriented demand per commodity (total accross commodities)
at iteration $k$;
$\tetab_j(k)$ is the portion of output flow to $j$ affected by the restricted
supply of other output links at iteration $k$,
and is the union of intervals that have become active as of iteration $k$;
and $V(k)$ is the set of output links still to be processed at iteration $k$.
\item If $V(k) = \emptyset$, stop.

\item For all output links $j\in V(k)$, find flow reduction factors:
\be
\alpha_j(k) = \frac{R_j}{\St_{1,j}(k)},
\label{simo-reduction-factors}
\ee
and find the most restrictive output link out of the remaining ones:
\be
j^\ast(k) = \arg\min_{j\in V(k)} \alpha_j(k).
\label{simo-most-restrictive}
\ee
\begin{itemize}
\item If $\alpha_{j^\ast(k)}(k) \geq 1$, then assign:
\begin{eqnarray}
f_{1,j}^c & = & \St_{1,j}^c(k), \;\; \forall j\in V(k), \;\; c=1,\dots, C; 
\label{simo-freeflow}\\
V(k+1) & = & \emptyset. \nonumber
\end{eqnarray}
\item Else, assign:
\begin{eqnarray}
\St_{1,j^\ast}^c(k+1) & = & \alpha_{j^\ast}(k)\St_{1,j^\ast}^c(k), \;\; c=1,\dots,C; 
\label{simo-step3-1} \\
\St_{1,j}^c(k+1) & = & \frac{\St_{1,j}(k+1)}{S_{1,j}}S_{1,j}^c, \;\; 
j\in V(k)\setminus\{j^\ast\}, \;\; c=1,\dots, C, \label{simo-step3-2} \\
& & \mbox{ where } \nonumber \\
\St_{1,j}(k+1) & = & \St_{1,j}(k) - 
S_{1,j}\left(|\etab_{j^\ast, j}| - |\tetab_j(k)\cap\etab_{j^\ast, j}|\right)
\left( 1 - \frac{\sum_{c=1}^C\St_{1,j^\ast}^c(k+1)}{S_{1,j^\ast}} \right),
\label{simo-step3-3} \\
\tetab_j(k+1) & = & \tetab_j(k) \cup \etab_{j^\ast, j}, \;\;
j\in V(k); \nonumber \\
f_{1,j}^c & = & \St_{1,j}^c(k+1), \;\;
j:~\tetab_j(k+1)=[0,1], \;\;
c=1,\dots, C;
\label{simo-step3-4} \\
V(k+1) & = & V(k)\setminus\{j:~\tetab_j(k+1)=[0,1]\}, \nonumber
\end{eqnarray}
where $|\tetab_j(k)\cap\etab_{j^\ast, j}|$ denotes the measure of the
interval intersection.\footnote{This set may be disjoint.}
\end{itemize}

\item Set $k := k+1$ and return to step~2.
\end{enumerate}
This algorithm takes no more than $N$ iterations to complete.

The algorithm can be understood as follows.
$V(k)$ is the set of output links who have not had their incoming flows determined as of iteration $k$.
In \eqref{simo-reduction-factors} and \eqref{simo-most-restrictive}, the output link whose remaining supply is most demanded (that is, whose ratio of remaining supply to total incoming demand is lowest) is identified and labeled $j^\ast(k)$.
If this most-demanded link is actually able to handle all of its demand, then we reach \eqref{simo-freeflow} and all node flows are in freeflow.
Otherwise, we reach \eqref{simo-step3-1}-\eqref{simo-step3-4}.
Equation \eqref{simo-step3-1} assigns the flow to the most-restricted output link $j^\ast(k)$ (this flow will be $R_{j^\ast}$, which can be seen by examining \eqref{simo-reduction-factors}).
Equations \eqref{simo-step3-2} and \eqref{simo-step3-3} perform the application of the mutual restriction interval for the just-filled link $j^\ast$ to all other output links $j$ in a two-step process.
The quantity $\St^c_{1,j}(k+1)$ is a ``running demand'' for commodity $c$ to link $j$ after relaxed FIFO has been applied to the original demand $S_{1,j}^c$.

This ``running demand'' that takes the activation of mutual restriction intervals is calculated in \eqref{simo-step3-3}.
Note that this calculation is exactly the calculation of a shaded area in the diagrams of Figure \ref{fig-f12}. In particular, $|\etab_{j^\ast, j}| - |\tetab_j(k)\cap\etab_{j^\ast, j}|$ defines the height of the union of a just-activated mutual restriction interval $\etab_{j^\ast, j}$ with the current active mutual restriction interval(s) $\tetab_j(k)$, and $1 - \left(\sum_{c=1}^C\St_{1,j^\ast}^c(k+1)\right)/\left(S_{1,j^\ast}\right)$ defines the width of the new rectangle.
Equation \eqref{simo-step3-2} simply scales each commodity's demand by the same ``blockage'' represented by the rectangle.
Similar to $\St^c_{1,j}(k+1)$, $\tetab_j(k+1)$ is a ``running mutual restriction interval'' that represents the union of mutual restriction intervals acting on flow to $j$ up to this point.
When link $j$ has a running mutual restriction interval of $[0,1]$, it has been fully blocked, and so its flow is assigned by its running demand in \eqref{simo-step3-4}.
This construction of the heights and widths of the rectangles will also be used in the multi-input-multi-output case in the next Section.


Finally, we state the result of this Section as a theorem.
As with Theorem \ref{theo_miso_optimal}, the proof is deferred to Section \ref{subsec_mimo_relaxed}.
\begin{theo}
The SIMO input-output flow computation algorithm constructs the
unique solution of the maximization
problem~\eqref{simo_objective}-\eqref{simo_rfifo_constraint}.
\label{theo_simo_optimal}
\end{theo}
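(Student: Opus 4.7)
I would prove Theorem~\ref{theo_simo_optimal} by carrying along the iterations $k=0,1,2,\dots$ a single inductive invariant: after iteration $k$, for every output link $j$,
\[
\St_{1,j}(k) \;=\; S_{1,j} \;-\; \CA\Bigl(\,\bigcup_{j'\in P(k),\, j'\neq j}\QQ_{j',j}\Bigr),
\]
where $P(k)$ denotes the set of output links already removed from $V$ by iteration $k$ and each $\QQ_{j',j}$ is built with the flow $f_{1,j'}$ the algorithm has committed to $j'$. This single identity is what simultaneously delivers feasibility (the relaxed FIFO constraint \eqref{simo_rfifo_constraint} holds with equality on every blocked $j$), maximality (every flow is pinned at a tight upper bound), and uniqueness (the binding constraints force the flow vector commodity by commodity).

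\textbf{Easier constraints and how the three conclusions drop out.} Non-negativity \eqref{simo_nonnegativity_constraint}, demand \eqref{simo_demand_constraint}, and commodity proportionality \eqref{simo_proportionality_constraint} are preserved trivially by \eqref{simo-step3-1}--\eqref{simo-step3-4}, since those updates scale each per-commodity running demand by a common factor in $[0,1]$. For supply \eqref{simo_supply_constraint}: in the nontrivial branch, \eqref{simo-step3-1} gives $\sum_c \St_{1,j^*}^c(k+1)=\alpha_{j^*}(k)\St_{1,j^*}(k)=R_{j^*}$, so $f_{1,j^*}=R_{j^*}$ exactly saturates supply; in the freeflow exit \eqref{simo-freeflow} the inequality $\alpha_{j^*(k)}(k)\geq 1$ combined with $j^*(k)$ being the running minimum forces every $\St_{1,j}(k)\leq R_j$; and on a blocked $j$ the supply bound is inherited from the invariant once it is verified. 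With the invariant in hand, maximality follows because the strictly increasing objective \eqref{simo_objective} is pinned against a tight upper bound on every link, and uniqueness follows by walking the same processing order through any competing feasible flow vector and using \eqref{simo_proportionality_constraint} to fix each per-commodity flow.

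\textbf{The main obstacle: the area increment identity.} The hard part is the inductive step, namely showing that the scalar reduction in \eqref{simo-step3-3}, $S_{1,j}(|\etab_{j^*,j}|-|\tetab_j(k)\cap\etab_{j^*,j}|)(1-R_{j^*(k)}/S_{1,j^*(k)})$, equals the marginal area $\CA\bigl(\QQ_{j^*(k),j}\setminus\bigcup_{j'\in P(k)}\QQ_{j',j}\bigr)$. Each $\QQ_{j',j}$ is a rectangle right-anchored at $x=S_{1,j}$ with vertical extent $\etab_{j',j}$ and width $(1-R_{j'}/S_{1,j'})S_{1,j}$, so this marginal area collapses to $|\etab_{j^*,j}\setminus\tetab_j(k)|$ times the new rectangle's width \emph{precisely when} every previously committed rectangle for $j$ is at least as wide as the new one. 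I would therefore need an order-preservation lemma of the form ``at each $k$, $R_{j^*(k)}/S_{1,j^*(k)}\geq R_{j'}/S_{1,j'}$ for every $j'\in P(k)$.'' The monotonicity $\alpha_{j^*(k+1)}(k+1)\geq\alpha_{j^*(k)}(k)$ is immediate because each $\St_{1,j}$ is nonincreasing and $j^*$ is always the running $\arg\min$; the subtler step is to derive from the explicit update \eqref{simo-step3-3} the corresponding lower bound on the static ratio $R_{j^*(k+1)}/S_{1,j^*(k+1)}$. Reconciling the dynamic priority rule \eqref{simo-reduction-factors}--\eqref{simo-most-restrictive} with the static rectangle geometry is the delicate heart of the proof and is where I expect the bulk of the technical work.
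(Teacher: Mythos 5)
Your route differs from the paper's in one structural respect: the paper never proves Theorem~\ref{theo_simo_optimal} directly. It proves the general MIMO result (Theorem~\ref{theo_mimo_relaxed_optimal}) and then obtains the SIMO and MISO theorems as corollaries, using Lemma~\ref{lemma_mimo_simo} to identify the SIMO algorithm with the $M=1$ instance of the MIMO algorithm. Your direct inductive invariant
$\St_{1,j}(k) = S_{1,j} - \CA\bigl(\bigcup_{j'\in P(k),\,j'\neq j}\QQ_{j',j}\bigr)$
is, in substance, the $M=1$ restriction of what the paper's MIMO proof does: decompose the solution into per-destination flow blocks fixed by the proportionality constraint, show each assigned block has a tight constraint (supply for $j^\ast$, relaxed FIFO for blocked $j\neq j^\ast$), and justify the incremental update \eqref{simo-step3-3} as the marginal area of a union of right-anchored rectangles via the inclusion--exclusion identity the paper writes out in \eqref{eq-mimo-union}--\eqref{eq-mimo-jsharp}. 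Your treatment of the easy constraints and of how maximality and uniqueness follow from tightness matches the paper's. The direct route buys a cleaner statement of the invariant; the paper's route buys one proof for three theorems.

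The genuine gap is the one you name yourself and then defer: the order-preservation lemma. The marginal-area formula in \eqref{simo-step3-3} is valid only if every previously committed rectangle $\QQ_{j',j}$ is at least as wide as the new one, i.e., only if the static ratios $R_{j^\ast(k)}/S_{1,j^\ast(k)}$ are nondecreasing in $k$. But the algorithm selects $j^\ast(k)$ by minimizing the \emph{dynamic} ratio $R_j/\St_{1,j}(k)$ in \eqref{simo-reduction-factors}--\eqref{simo-most-restrictive}, and the two orderings provably coincide only at $k=0$: for $k\geq 1$ the running demands $\St_{1,j}(k)$ have been shrunk by different factors depending on which intervals $\etab_{j^\ast,j}$ have activated, so a link with a smaller static ratio can be passed over at iteration $k$ and selected at iteration $k+1$. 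Your proposal correctly isolates this as the crux but does not close it, so as written the proof is incomplete precisely where it matters. It is worth noting that the paper's own proof handles this point with a single asserted sentence ("subsequent links $j^\ast$ at later iterations will always be less restricted than those in previous iterations") and supplies no more detail than you do; your explicit worry about reconciling the dynamic selection rule with the static rectangle geometry is well-founded and is not resolved by the paper either.
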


\subsection{Multiple-Input-Multiple-Output (MIMO) Node}\label{subsec_mimo_relaxed}
To formally state the flow maximization problem for the MIMO node,
we need to extend the MISO optimization
problem~\eqref{miso_objective}-\eqref{miso_priority_constraint}
and the SIMO optimization 
problem~\eqref{simo_objective}-\eqref{simo_rfifo_constraint}.
While the generalization of the objective~\eqref{miso_objective},
\eqref{simo_objective} and
constraints~\eqref{miso_nonnegativity_constraint}-\eqref{miso_proportionality_constraint},
\eqref{simo_nonnegativity_constraint}-\eqref{simo_proportionality_constraint}
is straightforward,
extending the priority constraint~\eqref{miso_priority_constraint}
and the relaxed FIFO constraint~\eqref{simo_rfifo_constraint} to
the MIMO case requires a little more work.

First, we introduce the concept of oriented priorities:
\be
p_{i,j} = p_i \frac{\sum_{c=1}^C S_{i,j}^c}{\sum_{c=1}^C S_i^c}, \;\;\;
i=1, \dots, M, \;\; j=1, \dots, N.
\label{eq_oriented_priorities0}
\ee
Recall our discussion about the physical meaning of priorities in Section~\ref{subsec_miso}.
Qualitatively, priorities mean that drivers from each link $i$ ``compete''
to claim supply at rate $p_i$. 
The oriented priority $p_{i,j}$ can be thought of as the rate at which drivers 
from a link $i$ claim supply from the \emph{particular} link $j$.
This means that the ability of vehicles from $i$ to claim $j$'s supply is proportional to the portion of $i$ vehicles that are actually trying to claim $j$'s supply.

Before we present the algorithm, we need to define two items.

\bd
If, for a given input link $i$, whose demand cannot be satisfied
($\sum_{j=1}^N f_{i,j} < S_i$), there exists
at least one output link $j^\ast$, such that: (1) $S_{i,j^\ast}>0$; and
(2) $p_{i', j^\ast}f_{i,j^\ast} \geq p_{i,j^\ast}f_{i',j^\ast}$ 
for any $i'\neq i$,
we say that such output $j^\ast$ is \emph{restricting}
for input $i$.
\label{def-restrictive-output}
\ed
We need this definition to formulate the priority constraint, and it is only
valid for $M>1$.
If the output $j^\ast$ is restricting for input links $i'$ and $i''$,
then, according to this definition,
$p_{i'',j^\ast} f_{i',j^\ast} = p_{i',j^\ast} f_{i'',j^\ast}$.

\bd \label{def-restricting-set}
For a given input link $i$, define the set of restricting output links (as defined by Definition \ref{def-restrictive-output}) for that $i$ as
\begin{equation*}
W_i \triangleq \Big\{ j^\ast: \;\; R_{j^\ast} > S_{i,j^\ast} >0, \;\;
\exists \, i' \neq i \; \textnormal{s.t.} \; p_{i',j^\ast}f_{i,j^\ast} \geq p_{i,j^\ast}f_{i',j^\ast}
\Big\}.
\end{equation*}
\ed

Now, we can formulate the flow maximization problem for the general
MIMO node:
\be
\max\left(\sum_{i=1}^M\sum_{j=1}^N\sum_{c=1}^C f_{i,j}^c\right),
\label{mimo_objective}
\ee
subject to:
\begin{eqnarray}
& & f_{i,j}^c \geq 0, \;\; i=1,\dots,M, \; j=1,\dots,N, \; c=1,\dots,C \;
\mbox{ --- non-negativity constraint};
\label{mimo_nonnegativity_constraint} \\
& & f_{i,j}^c \leq S_{i,j}^c, \;\; i=1,\dots,M, \; j=1,\dots,N, \; 
c=1,\dots,C \; \mbox{ --- demand constraint};
\label{mimo_demand_constraint} \\
& & \sum_{i=1}^M f_{i,j} \leq R_j, \;\; j=1,\dots,N \;
\mbox{ --- supply constraint};
\label{mimo_supply_constraint} \\
& & \frac{f_{i,j}^c}{f_{i,j}} =
\frac{S_{i,j}^c}{S_{i,j}}, \;\; i = 1,\dots,M,\;
j=1,\dots,N, c=1,\dots,C \;\; \mbox{ --- proportionality constraint}\nonumber \\
& & \mbox{for commodity flows};
\label{mimo_proportionality_constraint}\\
& & \left.\begin{array}{cl}
\mbox{(a)} &
\mbox{For each input link $i$ such that } \\
& \sum_{j=1}^N f_{i,j} <  S_i, \;\; W_i\neq\emptyset; \\
\mbox{(b)} &
\mbox{For each input link $i$ such that $W_i\neq\emptyset$,} \\
& f_{i,j} \geq \frac{p_{i,j}}{\sum_{i'=1}^M p_{i',j}} R_j,
\;\; \forall j\in W_i, \\
\end{array}
\right\} \mbox{ --- priority constraint};
\label{mimo_priority_constraint}\\
& & f_{i,j} \leq S_{i,j} - \CA\left(\bigcup_{j'\in W_i\setminus\{j\}}\QQ_{j',j}^i \right),
\;\; i=1,\dots,M, \; j=1,\dots,N \;\;
\mbox{ --- relaxed FIFO} \nonumber \\
& & \mbox{ constraint}.
\label{mimo_rfifo_constraint}
\end{eqnarray}
Constraint~\eqref{mimo_priority_constraint} generalizes the
MISO priority constraint~\eqref{miso_priority_constraint},
with $W_i$ being the set of restricting outputs for input link $i$
as defined by Definition \ref{def-restricting-set}.
For each output link $j$, $j=1,\dots,N$, 
flows $\sum_{c=1}^C f_{i,j}^c$ fall into two categories:
(1) restricted by \emph{this} output link; and
(2) not restricted by \emph{this} output link.
Condition~\eqref{mimo_priority_constraint}(a) states that 
if the flow from an input $i$ is supply-constrained, there
do exist output links $j$, such that flows $f_{i,j}$
are of category~1;
and for a given output $j$, input-output flows of category~1 are allocated
proportionally to their priorities.
For every output link $j$, $j=1,\dots,N$, 
condition~\eqref{mimo_priority_constraint}(b) says that these category 1 flows
may take ``leftover'' supply after the category 2 flows into $j$ have been resolved.
For a SIMO ($M=1$) node, there is no competition between input flows, and
constraint~\eqref{mimo_priority_constraint} is satisfied automatically.

Constraint~\eqref{mimo_rfifo_constraint} generalizes the SIMO
relaxed FIFO constraint~\eqref{simo_rfifo_constraint}.
Here,
\begin{equation}
\QQ_{j',j}^i = \etab_{j',j}^i \times \left[\frac{f_{i,j'}}{S_{i,j'}}S_{i,j}, S_{i,j}\right], \label{eq-rectangle-q-mimo}
\end{equation}
which is a generalization of~\eqref{eq-rectangle-q-simo}, and $\CA(\cdot)$ denotes the area
of a two-dimensional object.
For MIMO nodes with full FIFO, constraint~\eqref{mimo_rfifo_constraint}
together with the supply constraint~\eqref{mimo_supply_constraint}
translates to:
\be
f_{i,j}\leq\min_{j'\in W_i}\left\{\frac{f_{i,j'}}{S_{i,j'}}\right\}S_{i,j},
\label{mimo_fifo_constraint}
\ee
and, since we are solving the flow maximization problem,
\eqref{mimo_fifo_constraint} can be replaced with the equality constraint:
\be
f_{i,j}=\min_{j^\ast\in W_i}\frac{f_{i,j^\ast}}{S_{i,j^\ast}}S_{i,j}.
\label{mimo_fifo_constraint2}
\ee
For MIMO nodes with no FIFO and for MISO ($N=1$) nodes,
$\CA\left(\bigcup_{j'\in W_i\setminus\{j\}}\QQ_{j',j}^i\right) = 0$,
and thus, constraint~\eqref{mimo_rfifo_constraint} degenerates into the
demand constraint.

\begin{remark}
Note that in the case of multiple input links
mutual restriction intervals are to be
specified per input link.
This can be justified by the following example.
In the node representing a junction
with 2 input and 3 output links,
shown in Figure~\ref{fig-intersection},
consider the influence of the output link 5
on the output link 4.
If vehicles enter links 4 and 5 from link 1, then
it is reasonable to assume that once link 5 is jammed
and cannot accept any vehicles, there is no flow
from 1 to 4 either, since vehicles queueing for a U-turn into link 4 would use the same lanes as those vehicles queueing for a left turn into link 5. 
In other words, $\etab_{5,4}^1=[0, 1]$.
On the other hand, if vehicles arrive from link 2,
blockage of the output link 5 may hinder, but not 
necessarily prevent, traffic from flowing into the output
link 4, and so $\etab_{5,4}^2=[1-\epsilon, 1]$, with some
$0< \epsilon<1$ that depends on the portion of 2's of lanes that allow both a right-turn and through movement.
So, from now on we will write $\etab_{j',j}^i$
and $\QQ_{j',j}^i$with index $i$
identifying the input link.
Example 1 in Section \ref{sec_examples} discusses a situation like this in detail.
\end{remark}

\begin{figure}[htb]
\centering
\includegraphics[width=1.5in]{./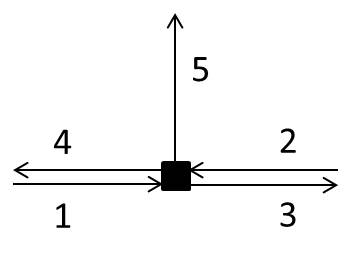}
\caption{Intersection node.}
\label{fig-intersection}
\end{figure}

Next, we present the algorithm that solves the flow maximization 
problem~\eqref{mimo_objective}-\eqref{mimo_rfifo_constraint}.
\begin{enumerate}
\item Initialize:
\begin{eqnarray*}
\Rt_j(0) & := & R; \\
U_j(0) & := & \left\{ i \in \{1,\dots,M\}: \; \sum_{c=1}^C S_{i,j}^c > 0\right\}; \\
\St_{i,j}^c(0) & := & S_{i,j}^c; \\
\St_{i,j}(0) & := & \sum_{c=1}^C \St_{i,j}^c(0);\\
\tetab_j^i(0) & := & [0, 0]; \\
k & := & 0; \\
& & i = 1,\dots,M, \;\;\; j = 1,\dots,N, \;\;\; c = 1,\dots,C.
\end{eqnarray*}
As before, $k$ denotes iteration;
$\Rt_j(k)$ is the remaining supply of output link $j$ at iteration $k$;
$U_j(k)$ is the set of input links $i$ that have nonzero demands towards $j$ (that is, at least one nonzero $S_{i,j}^c$) and have not had their flows assigned as of iteration $k$;
$\St_{1,j}^c(k)$ ($\St_{1,j}(k)$) is the oriented demand per commodity (total)
at iteration $k$;
$\tetab^i_j(k)$ is the portion of input-$i$-to-output-$j$ flow affected by the restricted
supply of other output links at iteration $k$
and is the union of intervals affecting that particular flow that have become active as of iteration $k$.

\item Define the set of output links that still need processing:
\[ V(k) = \left\{j: \; U_j(k) \neq \emptyset\right\}. \]
If $V(k) = \emptyset$, stop.

\item Check that at least one of the unprocessed input links has
nonzero priority, otherwise, assign equal positive priorities
to all the unprocessed input links:
\be
\pt_i(k) = \left\{\begin{array}{ll}
p_i, &
\mbox{ if there exists } i' \in \bigcup_{j\in V(k)} U_j(k):\;
p_{i'} > 0, \\
\frac{1}{\left|\bigcup_{j\in V(k)}U_j(k)\right|}, &
\mbox{ otherwise},
\end{array}\right.
\label{eq_nonzero_priorities}
\ee
where $\left|\bigcup_{j\in V(k)} U_j(k)\right|$ denotes the number
of elements in the union $\bigcup_{j\in V(k)} U_j(k)$;
and for each output link $j\in V(k)$ and input link $i\in U_j(k)$
compute oriented priority:
\be
\pt_{i,j}(k) = \pt_i(k) \frac{\sum_{c=1}^C S_{i,j}^c}{\sum_{c=1}^C S_i^c}.
\label{eq_oriented_priorities}
\ee

\item For each $j\in V(k)$, compute factors:
\be
a_j(k) = \frac{\Rt_j(k)}{\sum_{i\in U_j(k)}\pt_{i,j}(k)},
\label{eq_aj_factors}
\ee
and find the smallest of these factors:
\be
a_{j^\ast}(k) = \min_{j\in V(k)} a_j(k).
\label{eq_aj_min}
\ee
These factors $a_j(k)$ describe the ratio of the supply $R_j(k)$ to the demand placed upon $j$ by way of the oriented priorities $\pt_{i,j}(k)$\footnote{
If, as before, we consider oriented priorities as describing the ``rate'' at which supply is claimed, then $a_j(k) < a_{j'}(k)$ means that $R_j(k)$, which is being filled at rate $\sum_{i\in U_j(k)} \pt_{i,j}(k)$, is filled ``before'' $R_{j'}(k)$, which is being filled at rate $\sum_{i\in U_{j'}(k)} \pt_{i,j'}(k)$ (see \citet{wright_node_dynamic_2016} for more on this).}.
The link $j^\ast$ has the most demanded supply of
all output links.

\item Define the set of input links
whose demand does not exceed the allocated supply:
\[
\Ut(k) = \left\{i\in U_{j^\ast}(k):\;
\left( \sum_{j\in V(k)} \St_{i,j}(k) \right) \leq \pt_i(k)a_{j^\ast}(k)\right\}.
\]
\begin{itemize}
\item If $\Ut(k) \neq \emptyset$, then for all output links $j\in V(k)$ assign:
\begin{eqnarray}
f_{i,j}^c & = & \St_{i,j}^c(k), \;\;\; i\in\Ut(k),\;\; c=1,\dots,C; \label{eq_mimo_relaxed_freeflow} \\
\Rt_j(k+1) & = & \Rt_j(k) - \sum_{i\in\Ut(k)} f_{i,j}; \nonumber \\
U_j(k+1) & = & U_j(k) \setminus \Ut(k).  \nonumber
\end{eqnarray}
\item Else, for all input links $i\in U_{j^\ast}(k)$,
output links $j\in V(k)$ and commodities $c=1,\dots,C$, assign:
\begin{eqnarray}
\St_{i,j^\ast}^c(k+1) & = & \St_{i,j^\ast}^c(k)
\frac{\pt_{i,j^\ast}(k)a_{j^\ast}(k)}{\St_{i,j^\ast}(k)},
\;\;\; c=1,\dots,C;
\label{eq_mimo_relaxed_step5_1} \\
\St_{i,j}^c(k+1) & = & \St_{i,j}^c(k)\frac{\St_{i,j}(k+1)}{\St_{i,j}(k)},
\;\; i\in U_j(k)\cap U_{j^\ast}(k), \;\; j\in V(k)\setminus\{j^\ast\},
\label{eq_mimo_relaxed_step5_2} \\
\mbox{ where} & & \nonumber \\
\St_{i,j}(k+1) & = & \St_{i,j}(k) -
S_{i,j}\left(|\etab_{j^\ast ,j}^i| - |\tetab_j^i(k)\cap\etab_{j^\ast ,j}^i| \right)
\left( 1 - \frac{\sum_{c=1}^C \St_{i,j^\ast}^c(k+1)}{S_{i,j^\ast}} \right);
\label{eq_mimo_relaxed_step5_3} \\
\St_{i,j}^c(k+1) & = & \St_{i,j}^c(k), \;\;\;
i \not\in U_j(k)\cap U_{j^\ast}(k);\nonumber \\
\tetab_j^i(k+1) & = & \tetab_j^i(k) \cup \etab_{j^\ast, j}^i; \nonumber \\
f_{i,j}^c & = & \St_{i,j}^c(k+1), \nonumber \\
& & i\in U_j(k)\cap U_{j^\ast}(k), \; j\in V(k): \;
\tetab_j^i(k+1)=[0,1], \; c=1,\dots,C;
\label{eq_mimo_relaxed_step5_4} \\
\Rt_j(k+1) & = & \Rt_j(k) -
\sum_{i\in U_{j^\ast}(k):\etab_j^i(k+1)=[0,1]} \; f_{i,j}; \nonumber\\
U_j(k+1) & = & U_j(k) \setminus \left\{i\in U_{j^\ast}(k):\;
\tetab_j^i(k+1)=[0,1]\right\}. \nonumber
\end{eqnarray}
\end{itemize}

\item Set $k:=k+1$, and return to step 2.
\end{enumerate}

This algorithm takes no more than $M+N$ iterations to complete.

This algorithm can be best understood as a generalization of the algorithm for solving the SIMO node problem (Section \ref{subsec_simo}).
Similar to the SIMO problem, $V(k)$ is the set of output links whose input flows are not yet determined as of iteration $k$.
At each iteration $k$, $j^\ast \in V(k)$ is the output whose supply is most-demanded by its oriented priorities.
In step 5, it is determined whether any of the links that still want to send demand to $j^\ast$ (i.e., the $i \in U_{j^\ast}(k)$) will be able to send all their vehicles before this most-demanded link $j^\ast$'s supply is exhausted.
If so, then $\Ut(k)$ is nonempty, the first case of step 5 is entered, and these ``lucky'' links $\Ut(k)$ are able to satisfy all their demand in \eqref{eq_mimo_relaxed_freeflow}.

If not, then all links $i \in U_{j^\ast}(k)$ will still have leftover demand after $j^\ast$ is filled, and we enter the second case of step 5.
In \eqref{eq_mimo_relaxed_step5_1}, the demands into $j^\ast$ are scaled so that when their flows are assigned in \eqref{eq_mimo_relaxed_step5_4}, each $i \in U_{j^\ast}(k)$ will fill its priority-proportional share of $\Rt_{j^\ast}(k)$ (see the relevant calculation in the proof of Theorem \ref{theo_mimo_relaxed_optimal} to check this).
In \eqref{eq_mimo_relaxed_step5_2} and \eqref{eq_mimo_relaxed_step5_3}, relaxed FIFO is enforced on all flows into \emph{different} $j \neq j^\ast$ in accordance with their mutual restriction intervals $\etab_{j^\ast, j}^i$ (this is the same procedure as in \eqref{simo-step3-2} and \eqref{simo-step3-3} in the SIMO case).
Finally, \eqref{eq_mimo_relaxed_step5_4} sets all flows that can be found at this point, due to them being constrained by a FIFO constrant in the form of a running mutual restriction interval of $[0,1]$ (note that this includes the flows into $j^\ast$, as $\etab_{j^\ast, j^\ast}^i = [0,1] \; \forall i$ by definition).

The following lemma states that in the case of $N=1$,
the MIMO algorithm produces the same result as the MISO
algorithm described in Section \ref{subsec_miso}.
\begin{lemma}
The MISO algorithm is a special case of the MIMO algorithm with $N=1$.
\label{lemma_mimo}
\end{lemma}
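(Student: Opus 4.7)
The plan is to trace through the MIMO algorithm step-by-step under the substitution $N=1$, verify that each quantity collapses to its MISO counterpart, and conclude that the iteration dynamics coincide with those of the MISO algorithm of Appendix \ref{app_merge}. Because the MISO algorithm is a straightforward generalization of the \citet{tampere11} merge algorithm to arbitrary priorities, what I really need to confirm is (i) that oriented priorities reduce to input priorities, (ii) that the relaxed FIFO machinery becomes vacuous, and (iii) that the ``lucky-input'' branch and the ``restricted-input'' branch of step 5 coincide with the free-flow assignment and priority-proportional redistribution of the MISO routine, respectively.

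First I would handle the structural simplifications. With $N=1$, the only output index is $j=1$, so $S_{i,1}=S_i$ and hence the oriented priority defined in \eqref{eq_oriented_priorities} satisfies
\begin{equation*}
\pt_{i,1}(k)=\pt_i(k)\frac{\sum_c S_{i,1}^c}{\sum_c S_i^c}=\pt_i(k),
\end{equation*}
so the oriented priorities degenerate to the ordinary ones. The set $V(k)$ can only equal $\emptyset$ or $\{1\}$, which means $j^\ast(k)=1$ whenever the algorithm has not terminated, and the selection rule \eqref{eq_aj_min} becomes trivial. The set $U_1(k)$ plays exactly the role of ``input links not yet assigned'' in the MISO algorithm, and the nonzero-priority safeguard \eqref{eq_nonzero_priorities} is the same in both algorithms.

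Next I would dispose of the relaxed-FIFO pieces. For each $i$, the MIMO constraint \eqref{mimo_rfifo_constraint} takes a union over $j'\in W_i\setminus\{j\}$, which is empty when the only output is $j$ itself; therefore \eqref{mimo_rfifo_constraint} degenerates to the demand constraint, as the paper already notes. In the algorithm, this shows up as follows: since $\etab_{1,1}^i=[0,1]$ by definition, the update $\tetab_1^i(k+1)=\tetab_1^i(k)\cup\etab_{1,1}^i=[0,1]$ is triggered immediately the first time $i$ participates in a restricted iteration, so the triangular set in \eqref{eq_mimo_relaxed_step5_4} removes $i$ from $U_1$ at once. The intermediate updates \eqref{eq_mimo_relaxed_step5_2}--\eqref{eq_mimo_relaxed_step5_3} are never invoked on other outputs because none exist, so $\St_{i,1}^c$ is only updated through \eqref{eq_mimo_relaxed_step5_1}.

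Now I would match the two branches of step 5 to the MISO flow assignments. In the free-flow branch, $\Ut(k)$ consists of the inputs whose remaining demand fits within their priority-proportional share $\pt_i(k) a_1(k)=\pt_i(k)\Rt_1(k)/\sum_{i'\in U_1(k)}\pt_{i'}(k)$; these get $f_{i,1}^c=\St_{i,1}^c(k)$ in \eqref{eq_mimo_relaxed_freeflow}, which is precisely what the MISO algorithm does for unrestricted inputs, and the remaining supply is correctly decremented. In the restricted branch, \eqref{eq_mimo_relaxed_step5_1} with $N=1$ sets
\begin{equation*}
\St_{i,1}^c(k+1)=\St_{i,1}^c(k)\,\frac{\pt_i(k)\,a_1(k)}{\St_{i,1}(k)},
\end{equation*}
which summed over $c$ gives $\sum_c\St_{i,1}^c(k+1)=\pt_i(k)a_1(k)$, i.e.\ input $i$ receives exactly its priority-proportional share of the remaining supply $\Rt_1(k)$; this is then committed as the final flow via \eqref{eq_mimo_relaxed_step5_4}, and $U_1$ is emptied. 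This coincides with the MISO algorithm's priority-proportional split of the residual supply among the still-unsatisfied inputs.

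The main obstacle I anticipate is bookkeeping: one must verify that the order in which inputs leave $U_1$ under the MIMO iteration matches the order used by the MISO algorithm, i.e.\ that the ``smallest demand first'' promotion done implicitly by $\Ut(k)$ is equivalent to the standard priority-proportional merge procedure's successive identification of demand-satisfied inputs. Once the substitutions above are in hand this is a matter of checking that both algorithms, at each iteration, either (a) fully serve every input whose demand is below its current priority-proportional share, or (b) exhaust the supply among all remaining inputs proportionally to priorities -- which is exactly the common dichotomy in both routines. After these checks, the assignments $f_{i,1}^c$ produced by both algorithms agree input-by-input and commodity-by-commodity, establishing the lemma.
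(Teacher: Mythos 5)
Your proposal is correct and follows essentially the same route as the paper, whose proof is just the one-line observation that for $N=1$ formulae \eqref{eq_aj_factors}--\eqref{eq_aj_min} give $j^\ast=1$ and $a_{j^\ast}(k)=\Rt_1(k)/\sum_{i\in U_1(k)}\pt_i(k)$; you simply spell out in full the reductions (oriented priorities collapsing to $\pt_i$, the FIFO machinery becoming vacuous, the two branches of step 5 matching the MISO branches) that the paper leaves implicit.
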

\begin{proof}
The proof follows from the fact that for $N=1$,
formulae (\ref{eq_aj_factors})-(\ref{eq_aj_min}) result in
$a_{j^\ast}(k) = a_1(k)=\frac{\Rt_{1}(k)}{\sum_{i\in U_1(k)}\pt_i(k)}$ and $j^\ast = 1$.
\end{proof}

The following lemma states that in the case $M=1$, the MIMO algorithm
with relaxed FIFO condition produces the same result as the 
SIMO algorithm with relaxed FIFO condition described in
Section~\ref{subsec_simo}.
\begin{lemma}
The SIMO algorithm with the relaxed FIFO condition is a special case of the MIMO
algorithm with $M=1$.
\label{lemma_mimo_simo}
\end{lemma}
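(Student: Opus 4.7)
The plan is to specialise the MIMO algorithm of Section~\ref{subsec_mimo_relaxed} to $M=1$ and verify that each step reduces to the corresponding step of the SIMO algorithm of Section~\ref{subsec_simo}, in the same spirit as the proof of Lemma~\ref{lemma_mimo}. First I would reconcile the data structures at initialisation: with $M=1$, every $U_j(k)$ is either $\{1\}$ or $\emptyset$, so $V(k)=\{j:U_j(k)\neq\emptyset\}$ mirrors the SIMO $V(k)$; the bookkeeping variable $\tilde{R}_j(k)$ is only decremented once $j$ leaves $V(k)$, hence $\tilde{R}_j(k)=R_j$ for every $j\in V(k)$; and the quantities $\tilde{S}_{1,j}^c(0)$, $\tilde{S}_{1,j}(0)$, $\tilde{\eta}_j^1(0)$ are set to the same initial values as in SIMO. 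The per-commodity proportionality invariant $\tilde{S}_{1,j}^c(k)/\tilde{S}_{1,j}(k)=S_{1,j}^c/S_{1,j}$ is preserved by~\eqref{eq_mimo_relaxed_step5_2}, so the MIMO commodity update agrees with the form in~\eqref{simo-step3-2}.

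The core of the argument is induction on the iteration index $k$. Substituting $M=1$ into~\eqref{eq_oriented_priorities} gives $\tilde{p}_{1,j}(k)=p_1\,S_{1,j}/S_1$, and~\eqref{eq_aj_factors} becomes $a_j(k)=R_j\,S_1/(p_1\,S_{1,j})$, whose minimiser over $V(k)$ agrees with SIMO's minimiser of $\alpha_j(k)=R_j/\tilde{S}_{1,j}(k)$ in~\eqref{simo-most-restrictive}, so both algorithms select the same $j^\ast(k)$. In the second branch of MIMO's step~5, \eqref{eq_mimo_relaxed_step5_1} collapses to
\[
\tilde{S}_{1,j^\ast}^c(k+1) \;=\; \tilde{S}_{1,j^\ast}^c(k)\,\frac{\tilde{p}_{1,j^\ast}(k)\,a_{j^\ast}(k)}{\tilde{S}_{1,j^\ast}(k)} \;=\; \alpha_{j^\ast}(k)\,\tilde{S}_{1,j^\ast}^c(k),
\]
which is exactly~\eqref{simo-step3-1}, while~\eqref{eq_mimo_relaxed_step5_3} coincides termwise with~\eqref{simo-step3-3} after identifying $\eta_{j^\ast,j}^1=\eta_{j^\ast,j}$ and $\tilde{\eta}_j^1(k)=\tilde{\eta}_j(k)$. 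The flow assignment~\eqref{eq_mimo_relaxed_step5_4} fires for exactly the same $j$ as the SIMO~\eqref{simo-step3-4}, since both are triggered when the running mutual-restriction interval reaches $[0,1]$.

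The step I expect to be the main obstacle is matching the two free-flow triggers: SIMO enters its first case when $\alpha_{j^\ast}(k)\geq 1$, while MIMO enters its first case when $\sum_{j\in V(k)}\tilde{S}_{1,j}(k)\leq \tilde{p}_1(k)\,a_{j^\ast}(k)$. Using the identities above, $\tilde{p}_1(k)\,a_{j^\ast}(k)=R_{j^\ast}\,S_1/S_{1,j^\ast}$ is stated in terms of the \emph{original} demand $S_{1,j^\ast}$, whereas SIMO's test is in terms of the current $\tilde{S}_{1,j^\ast}(k)$, and the left-hand side sums over all remaining outputs rather than a single one. Establishing that these two tests fire in the same iteration requires a careful inductive argument that tracks how the relaxed-FIFO reductions recorded by $\tilde{\eta}_j(k)$ relate the current demands back to the originals whenever $j^\ast$ is the active minimiser. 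Once that equivalence is in hand, the free-flow assignment~\eqref{eq_mimo_relaxed_freeflow} coincides with~\eqref{simo-freeflow}; both algorithms terminate at the same $k$ with $V(k)=\emptyset$, and the returned flows $f_{1,j}^c$ agree, proving the lemma.
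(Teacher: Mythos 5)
Your overall route --- specialize the MIMO algorithm to $M=1$ and match it step-by-step against the SIMO algorithm --- is exactly the paper's route; the paper's entire proof is the observation that for $M=1$ the factors \eqref{eq_aj_factors} reduce to $a_j(k)=R_j\St_1(k)/\St_{1,j}(k)$, so that $j^\ast=\arg\min_{j\in V(k)}R_j/\St_{1,j}(k)$ coincides with the SIMO selection \eqref{simo-most-restrictive}. Your verification of the initialization, the update equations \eqref{eq_mimo_relaxed_step5_1}--\eqref{eq_mimo_relaxed_step5_4} against \eqref{simo-step3-1}--\eqref{simo-step3-4}, and the bookkeeping of $\Rt_j(k)$ is all correct and considerably more thorough than what the paper records.

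The ``main obstacle'' you flag, however, has a short resolution that you should supply rather than defer to an unspecified induction, and it is the same point on which your computation of the argmin silently diverges from the paper's. Reading \eqref{eq_oriented_priorities} literally, with the \emph{original} demands $S_{1,j}$, you get $\pt_{1,j}(k)=p_1 S_{1,j}/S_1$, hence $a_j(k)\propto R_j/S_{1,j}$ and $\pt_1(k)a_{j^\ast}(k)=R_{j^\ast}S_1/S_{1,j^\ast}$; as you note, neither the minimizer nor the free-flow test then visibly matches SIMO's, which are stated in terms of the \emph{running} demands $\St_{1,j}(k)$, and for $k\ge 1$ they can genuinely differ. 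The paper's displayed identity $a_j(k)=R_j\St_1(k)/\St_{1,j}(k)$ shows that the intended reading is that the oriented priorities at iteration $k$ are formed from the running demands, $\pt_{1,j}(k)=\pt_1(k)\St_{1,j}(k)/\St_1(k)$ with $\St_1(k)=\sum_{j\in V(k)}\St_{1,j}(k)$. Under that reading both difficulties dissolve at once: the argmin of $a_j(k)$ is $\arg\min_j R_j/\St_{1,j}(k)$, matching \eqref{simo-most-restrictive}, and the MIMO free-flow test $\sum_{j\in V(k)}\St_{1,j}(k)\le\pt_1(k)a_{j^\ast}(k)=R_{j^\ast}\St_1(k)/\St_{1,j^\ast}(k)$ reduces, after cancelling $\St_1(k)$, to $\St_{1,j^\ast}(k)\le R_{j^\ast}$, i.e.\ to $\alpha_{j^\ast}(k)\ge1$ in \eqref{simo-reduction-factors}. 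No induction tracking $\tetab_j(k)$ is needed. You should make this substitution explicit (or note the discrepancy with the literal statement of \eqref{eq_oriented_priorities}), since with original demands in the oriented priorities the lemma as you set it up would not go through.
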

\begin{proof}
The proof follows from the fact that
for $M=1$, factors $a_j(k)$, defined in~\eqref{eq_aj_factors}, reduce to:
\[
a_j(k) = \frac{R_j \St_1(k)}{\St_{1,j}(k)},
\]
and
\begin{equation*}
j^\ast = \arg\min_{j\in V(k)}
\frac{R_j \St_1(k)}{\St_{1,j}(k)}
= \arg\min_{j\in V(k)}\frac{R_j}{\St_{1,j}(k)}. \qedhere
\end{equation*}
\end{proof}

The main result of this Section can be stated as the following theorem.
\begin{theo}
  Given a set of input links $i \in 1,\dots,M$, output links $j \in 1,\dots,N$,
  commodities $c \in 1,\dots,C$, priorities $\{p_i\}$, split ratios $\{\beta_{i,j}^c\}$, and
  mutual restriction intervals $\{ \etab_{j',j}^i \}$ ($j'\neq j$), the 
  algorithm of Section~\ref{subsec_mimo_relaxed} obtains the
  unique solution of the optimization 
  problem~\eqref{mimo_objective}-\eqref{mimo_rfifo_constraint}.
\label{theo_mimo_relaxed_optimal}
\end{theo}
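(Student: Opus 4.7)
The plan is to argue by induction on the algorithm's iteration index $k$. At each $k$ I maintain two invariants: (i) the flows committed in iterations $0,\dots,k$ already satisfy the partial versions of all constraints \eqref{mimo_nonnegativity_constraint}--\eqref{mimo_rfifo_constraint}, and (ii) the remaining running quantities $\St_{i,j}^c(k)$ and $\Rt_j(k)$ describe a reduced feasibility set into which any extension of the partial assignment to an optimum of the original problem must fall. The base case $k=0$ is immediate, since all running quantities equal their original values and no flow has been committed. The inductive step reduces the global optimization to a verification of what steps 3--5 do at a single iteration.

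Feasibility is then checked constraint by constraint. Non-negativity, the per-commodity proportionality constraint, and the demand constraint follow directly from the form of the updates \eqref{eq_mimo_relaxed_step5_1}--\eqref{eq_mimo_relaxed_step5_3}: running demands are shrunk by scaling factors in $[0,1]$ and all commodities of a given $(i,j)$ pair are rescaled by the same fraction. The supply constraint is baked into \eqref{eq_aj_factors}--\eqref{eq_aj_min}, since by construction $\sum_{i\in U_{j^\ast}(k)} \pt_{i,j^\ast}(k)\,a_{j^\ast}(k) = \Rt_{j^\ast}(k)$. The priority constraint follows from \eqref{eq_mimo_relaxed_step5_1}, which assigns each still-competing input a flow of $\pt_{i,j^\ast}(k)\,a_{j^\ast}(k)$ into $j^\ast$; the ratios $f_{i,j^\ast}/p_{i,j^\ast}$ are therefore equal across such $i$, which identifies $j^\ast$ as a restricting output in the sense of Definition \ref{def-restrictive-output} and yields condition (b) of \eqref{mimo_priority_constraint} with equality on the leftover supply $\Rt_{j^\ast}(k)$. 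The relaxed FIFO constraint is enforced by \eqref{eq_mimo_relaxed_step5_3}: the quantity subtracted from $\St_{i,j}(k)$ is exactly the area of the portion of $\QQ_{j^\ast,j}^i$ not already covered by the running union $\tetab_j^i(k)$, so that iterating in $k$ accumulates a total subtraction equal to $\CA\bigl(\bigcup_{j'} \QQ_{j',j}^i\bigr)$ by incremental inclusion--exclusion.

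Optimality then follows from the greedy choice \eqref{eq_aj_min}. The factor $a_j(k)$ is the maximal uniform scaling at which all still-competing inputs can send their oriented-priority share into $j$, so $a_{j^\ast}(k)$ identifies the output whose supply is exhausted first under any flow-maximizing allocation consistent with the priority constraint. Any feasible assignment sending strictly more flow into $j^\ast$ would overflow $\Rt_{j^\ast}(k)$; any assignment sending strictly less would leave supply unclaimed in violation of condition (b) of \eqref{mimo_priority_constraint} for the inputs still competing at iteration $k$. Once $j^\ast$ is fixed, the relaxed FIFO scaling in \eqref{eq_mimo_relaxed_step5_2}--\eqref{eq_mimo_relaxed_step5_3} is determined uniquely by the mutual restriction intervals, and the induction propagates to the remaining iterations. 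Uniqueness of the full assignment follows because each step is described by equality relations on the quantities still free to vary. Lemmas \ref{lemma_mimo} and \ref{lemma_mimo_simo} then yield Theorems \ref{theo_miso_optimal} and \ref{theo_simo_optimal} as special cases.

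The main obstacle I anticipate is the tight coupling between the implicitly-defined restricting set $W_i$ of Definition \ref{def-restricting-set} and the iterative bookkeeping of the rectangles $\QQ_{j',j}^i$, whose widths in \eqref{eq-rectangle-q-mimo} depend on the final flows $f_{i,j'}/S_{i,j'}$ that are not yet known at the time \eqref{eq_mimo_relaxed_step5_3} is executed. The key lemma to establish is that by the time a restricting output $j'$ is processed as $j^\ast$, its incoming flows have already been fixed and will not change in subsequent iterations, so that the width $1 - \sum_c \St_{i,j^\ast}^c(k+1)/S_{i,j^\ast}$ used in the algorithm coincides with the final width appearing in \eqref{eq-rectangle-q-mimo}. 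Showing this, together with the fact that the algorithm enumerates exactly the members of $W_i$ in an order compatible with incremental inclusion--exclusion, carries most of the technical weight of the proof.
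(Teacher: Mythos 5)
Your plan follows essentially the same route as the paper's proof: a constructive, iteration-by-iteration argument showing that each committed block of flows $\{f_{i,j}^c\}_c$ is tight against one of the constraints (supply for $j=j^\ast$, relaxed FIFO or demand otherwise), with the areas $\CA(\QQ_{j',j}^i)$ accumulated by incremental inclusion--exclusion, and with your anticipated key lemma appearing verbatim in the paper's argument (it notes that $f_{i,j^\ast}^c = \St_{i,j^\ast}^c$ when the restriction is applied and that successive $j^\ast$'s are always less restricted, so the rectangle widths used by the algorithm coincide with the final ones). If anything, your outline is more explicit than the paper's on why the greedy choice of $j^\ast$ is forced and on uniqueness, points the paper largely asserts rather than argues.
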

\begin{proof}
  The priority constraint~\eqref{mimo_priority_constraint}
  makes this optimization problem non-convex, except in the
  special cases mentioned in Section~\ref{subsec_miso}.
  In fact, we conjecture that
  in the MIMO case, both with and without relaxation of the FIFO constraint,
  there is no way to verify a solution faster than re-solving the
  problem~\eqref{mimo_objective}-\eqref{mimo_rfifo_constraint}.
  Thus, we will prove optimality
  by showing that as our algorithm proceeds through iterations, it
  constructs the unique optimal solution.
   
  We may decompose the problem into finding the $M \cdot N \cdot C$ interrelated quantities
  $\{f_{i,j}^c\}$. The $C$ flows for each $(i,j)$ are further constrained by our
  commodity flow proportionality constraint~\eqref{mimo_proportionality_constraint};
  solving for one of $\{f_{i,j}^1,\dots,f_{i,j}^C\}$ also finds them all. Our
  task then becomes finding optimal values for each of $M \cdot N$ subsets
  $\{f_{i,j}^1,\dots,f_{i,j}^C\}$. Our algorithm finds at least one of these $M \cdot N$
  subsets per iteration $k$. The assignments
  are done by either equation~\eqref{eq_mimo_relaxed_freeflow} or
  equation~\eqref{eq_mimo_relaxed_step5_4}.
  Over iterations, subsets are assigned to build up the unique optimal solution.
  We can show that each subset assigned is optimal; that is, at least one of the
  constraints is tight.
  
  Consider \eqref{eq_mimo_relaxed_step5_3}-\eqref{eq_mimo_relaxed_step5_4},
  our implementation of the relaxed FIFO constraint. In step 4 of our algorithm,
  we identify a single output link as $j^\ast$. The minimization in this step picks
  out a single link as the most restrictive of all output links. By the relaxed
  FIFO construction, all $i\in U_{j^\ast}$ will feel a partial FIFO effect
  instigated by $j^\ast$ as the most restrictive link. For a generic
  $j \neq j^\ast$, equation~\eqref{eq_mimo_relaxed_step5_3} enforces the
  relaxed FIFO constraint by decaying the oriented demand $\St_{i,j}(k)$. In fact, this
  modified oriented
  demand acts as a proxy for the relaxed FIFO constraint. Since a $f_{i,j}^c$ that
  is restricted by partial FIFO will, by construction, never obtain $f_{i,j}^c=S_{i,j}^c$,
  the running quantity $\St_{i,j}^c(k)$ represents an ``effective'' oriented demand
  after application of partial FIFO. A flow that is not constrained by partial FIFO
  will nevertheless not exceed its demand, as $\St_{i,j}^c(0) = S_{i,j}^c$, and $\St_{i,j}^c(k)$
  only decreases across $k$ as relaxed FIFO constraints are considered
  (ensuring compliance with the demand constraint~\eqref{mimo_demand_constraint}).
  On the other hand, if it turns out that a flow is found
  $f_{i,j}^c=\St_{i,j}^c(k) \leq S_{i,j}^c$ for some $k$, then this flow value may have
  some constraint imposed by partial FIFO.

  Flows assigned by equation~\eqref{eq_mimo_relaxed_step5_4} for $j \neq j^\ast$ are
  constrained by a relaxed FIFO constraint~\eqref{mimo_rfifo_constraint}. To see this,
  note that
  \begin{equation}
  \CA( \QQ_{j',j}^i) = \left| \etab_{j',j}^i \right| \left( 1 - \frac{\sum_{c=1}^C f_{i,j'}^c}{S_{i,j'}} \right) S_{i,j}
  \end{equation}
  where, recall, $\QQ_{j',j}^i$ is defined by~\eqref{eq-rectangle-q-mimo}. As applied in
  \eqref{eq_mimo_relaxed_step5_3}, $j'=j^\ast$ always, and
  $f_{i,j^\ast}^c = \St_{i,j^\ast}^c$ by~\eqref{eq_mimo_relaxed_step5_4}.
  
  For a union of two rectangles $\QQ_{j',j}^i$ and $\QQ_{j'',j}^i$, we have
  \begin{equation}
	  \CA \left( \QQ_{j',j}^i \cup \QQ_{j'',j}^i \right) = \CA \left(\QQ_{j',j}^i \right)
	  + \CA \left( \QQ_{j'',j}^i \right) - \CA \left(\QQ_{j',j}^i \cap \QQ_{j'',j}^i \right)
	  \label{eq-mimo-union}
  \end{equation}
  and
  \begin{equation}
	  \CA \left( \QQ_{j',j}^i \cap \QQ_{j'',j}^i \right) = \left( \left| \etab_{j',j}^i \cap
	  \etab_{j'',j}^i \right| \right)
	  \min_{j^\sharp \in \{j',j''\}} \left( 1 - \frac{\sum_{c=1}^C f_{i,j^\sharp}^c}{S_{i,j^\sharp}} \right) S_{i,j}. \label{eq-mimo-jsharp}
  \end{equation}
  Note that $j^\sharp$ as defined in~\eqref{eq-mimo-jsharp} is the less-restricted of the
  two output links $j'$ and $j''$. This means that in the context of our algorithm, since
  the most-restricted link $j^\ast$ is picked at each iteration, subsequent links $j^\ast$
  at later iterations will always be less restricted than those in previous iterations.
  Equation~\eqref{eq_mimo_relaxed_step5_3} thus takes the union (with the subtraction
  of the intersection) as done in \eqref{eq-mimo-union}-\eqref{eq-mimo-jsharp}, and
  incorporates the relaxed FIFO constraint~\eqref{mimo_rfifo_constraint}.
  
  Now consider equations~\eqref{eq_mimo_relaxed_step5_1} and
  \eqref{eq_mimo_relaxed_step5_4}, which apply in the special case where $j=j^\ast$.
  We have
    \begin{align*}
      \sum_{c=1}^C \sum_{i \in U_{j^\ast}(k)} f_{i,j^\ast}^c &= \sum_{c=1}^C \sum_{i \in U_{j^\ast}(k)} \St_{i,j^\ast}(k) \frac{\pt_{i,j^\ast}(k) a_{j^\ast}(k)}{\sum_{c'=1}^C \St_{i,j^\ast}^{c'}(k)} \\
      &= \sum_{i \in U_{j^\ast}(k)} \pt_{i,j^\ast}(k) a_{j^\ast}(k) \\
      &= \left( \sum_{i \in U_{j^\ast}(k)} \pt_{i,j^\ast}(k) \right) \frac{\Rt_{j^\ast}(k)}{\sum_{i \in U_{j^\ast}(k)} \pt_{i,j^\ast}(k)} \\
      &= \Rt_{j^\ast}(k),
    \end{align*}
  so all the flows into link $j^\ast$ take up all available supply. These
  flows are thus constrained by the supply constraint,~\eqref{mimo_supply_constraint}.
\end{proof}

\begin{cor}
Theorems~\ref{theo_miso_optimal}
and~\ref{theo_simo_optimal} follow from
Theorem~\ref{theo_mimo_relaxed_optimal} as special cases.
\end{cor}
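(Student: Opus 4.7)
The plan is to verify that the MISO and SIMO optimization problems are, respectively, the $N=1$ and $M=1$ specializations of the MIMO problem \eqref{mimo_objective}-\eqref{mimo_rfifo_constraint}, and then appeal to Lemmas \ref{lemma_mimo} and \ref{lemma_mimo_simo}, which already identify the MISO and SIMO algorithms with the corresponding specializations of the MIMO algorithm. Theorem \ref{theo_mimo_relaxed_optimal} then supplies both Theorems \ref{theo_miso_optimal} and \ref{theo_simo_optimal}.

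First I would handle the SIMO case ($M=1$). The objective \eqref{simo_objective} and constraints \eqref{simo_nonnegativity_constraint}-\eqref{simo_proportionality_constraint} follow immediately from \eqref{mimo_objective}-\eqref{mimo_proportionality_constraint} once the sum over input links reduces to a single term. Since only one input exists, there is no $i' \neq i$, so by Definition \ref{def-restricting-set} we have $W_1 = \emptyset$ and the priority constraint \eqref{mimo_priority_constraint} is vacuous. Dropping the suppressed $i$ index in \eqref{mimo_rfifo_constraint} yields exactly \eqref{simo_rfifo_constraint}. Combined with Lemma \ref{lemma_mimo_simo}, Theorem \ref{theo_mimo_relaxed_optimal} delivers Theorem \ref{theo_simo_optimal}.

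Next I would handle the MISO case ($N=1$). The objective and constraints \eqref{miso_nonnegativity_constraint}-\eqref{miso_proportionality_constraint} again specialize directly. The relaxed FIFO constraint \eqref{mimo_rfifo_constraint} has an empty index set $W_i \setminus \{j\}$ when $N=1$, so $\CA(\cdot) = 0$ and \eqref{mimo_rfifo_constraint} collapses into the demand constraint, which is already present. For the priority constraint, setting $N=1$ in \eqref{eq_oriented_priorities0} gives $p_{i,1} = p_i$, so \eqref{mimo_priority_constraint}(b) matches \eqref{miso_priority_constraint}(b) verbatim. For part (a), whenever two inputs $i'$ and $i''$ both have unmet demand, Definition \ref{def-restricting-set} with $N=1$ forces $W_{i'} = W_{i''} = \{1\}$, and Definition \ref{def-restrictive-output} applied in the two symmetric directions yields $p_{i''} f_{i',1} \geq p_{i'} f_{i'',1}$ and $p_{i'} f_{i'',1} \geq p_{i''} f_{i',1}$, whose combination is exactly the equality in \eqref{miso_priority_constraint}(a). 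Lemma \ref{lemma_mimo} then closes the MISO reduction.

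The main obstacle is the MISO priority-constraint translation, where the one-sided inequalities implicit in Definition \ref{def-restrictive-output} must be combined by a symmetry argument across each unordered pair of supply-constrained inputs to recover the equality form in \eqref{miso_priority_constraint}(a). Everything else is direct rewriting enabled by the collapse of a summation or the emptying of an index set, so once the priority-constraint equivalence is established the corollary follows from Theorem \ref{theo_mimo_relaxed_optimal} and the two lemmas without further computation.
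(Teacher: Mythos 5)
Your proposal is correct and follows essentially the same route the paper intends: the corollary rests on Lemmas~\ref{lemma_mimo} and~\ref{lemma_mimo_simo} for the algorithms, plus the reductions of the MIMO constraints already noted in Section~\ref{subsec_mimo_relaxed} (the priority constraint being automatic for $M=1$, the relaxed FIFO constraint degenerating to the demand constraint for $N=1$, and the symmetric application of Definition~\ref{def-restrictive-output} recovering the equality in~\eqref{miso_priority_constraint}(a)). Your write-up simply makes explicit the details the paper leaves implicit.
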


\begin{remark}
\label{remark:tampere}
Note that setting the priorities equal to input link capacities, $p_i = F_i$, and all
restriction intervals $\etab_{j',j}^i = [0,1]$, we recover the original node model and 
algorithm of \citet{tampere11}.
\end{remark}

\section{Examples}\label{sec_examples}
We present two examples to demonstrate the computation of node flows with our node model.
The first example is an extension of the example with four input links and four output links from \citet{tampere11}, where we have extended it to include the partial FIFO construction for some input links.

The second example represents an onramp to a freeway with a parallel managed lane facility as a node with three input links (onramp, freeway and managed lane) and two output links (freeway and managed lane).
In this example, we vary the input links' priorities to demonstrate how this affects the resulting node flows.

\subsection{Example One: An example with partial FIFO}
Consider the node schematically presented in Figure \ref{fig:4x4node}.
This node has four input links (links 1 through 4) and four output links (links 5 through 8).
\begin{figure}[htb]
\centering
\begin{subfigure}[b]{0.3\linewidth}
\centering
\includegraphics[width=1.5in]{./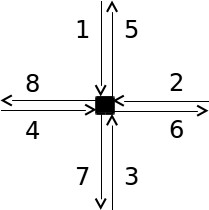}
\caption{\label{fig:4x4node}}
\end{subfigure}
\begin{subfigure}[b]{0.3\linewidth}
  \centering
  \includegraphics[width=50pt]{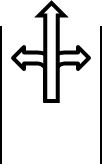}
  \caption{\label{fig:onelane}}
\end{subfigure}
\begin{subfigure}[b]{0.3\linewidth}
  \centering
  \includegraphics[width=100pt]{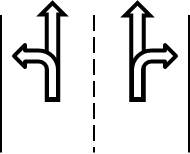}
  \caption{\label{fig:twolanes}}
\end{subfigure}

\caption{Items used in Example One: (\subref{fig:4x4node}) four-input, four-output node used, (\subref{fig:onelane}) one-lane structure and turn movements for input links 1 and 3 and (\subref{fig:twolanes}) two-lane structure and turn movements for input links 2 and 4.}
\label{fig-examplefifo}
\end{figure}

Suppose that our relevant parameters are:

\begin{tabular}{l l l l}
 $C=1$ & & & \\
 \hline
 $S_1=500$ & $S_2=2000$ & $S_3=800$ & $S_4=1700$ \\
 \hline
 $p_1=F_1=1000$ & $p_2=F_2=2000$ & $p_3=F_3=1000$ & $p_4=F_4=2000$ \\
 \hline
 $R_5=1000$ & $R_6=2000$ & $R_7=1000$ & $R_8=2000$ \\
 \hline
 $\beta_{1,5} = 0$ & $\beta_{1,6} = 0.1$ & $\beta_{1,7} = 0.3$ & $\beta_{1,8} = 0.6$ \\
 $\beta_{2,5} = 0.05$ & $\beta_{2,6} = 0$ & $\beta_{2,7} = 0.15$ & $\beta_{2,8} = 0.8$ \\
 $\beta_{3,5} = 0.125$ & $\beta_{3,6} = 0.125$ & $\beta_{3,7} = 0$ & $\beta_{3,8} = 0.75$ \\
 $\beta_{4,5} = \sfrac{1}{17}$ & $\beta_{4,6} = \sfrac{8}{17}$ & $\beta_{4,7} = \sfrac{8}{17}$ & $\beta_{4,8} = 0$ \\ \hline
\end{tabular}

where $F_i$ denotes the capacity of link $i$ and, we have omitted the commodity index $c$ on the demands and split ratios for readability, since there is only one commodity.

As stated above, these values are taken from the example of \citet{tampere11}.
Note that input links 2 and 4 have twice the capacity of input links 1 and 3.
Let us say that links 2 and 4 are two-lane roads, and links 1 and 3 are one-lane roads (see Figures \ref{fig:onelane} and \ref{fig:twolanes}).
Then, we can create a partial FIFO characterization of the multilane links 2 and 4.

%

\subsubsection{Defining restriction intervals}
\label{subsubsec_exampleeta}
Let us define restriction intervals
\begin{align*}
\etab_{j',j}^{2} &= \left\{ 
\begin{array}{ l c c c }
  \multicolumn{1}{r}{} & \multicolumn{1}{c}{j=5} & 
    \multicolumn{1}{c}{ j = 7} & \multicolumn{1}{c}{j = 8} \\
  \cline{2-4}
  j' = 5 & [0,1] & \emptyset & [\sfrac{1}{2}, 1] \\
  j' = 7 & \emptyset & [0,1] & [0, \sfrac{1}{2}] \\
  j' = 8 & [0,1] & [0,1] & [0,1] \\
\end{array} \right. \\
\etab_{j',j}^{4} &= \left\{ 
\begin{array}{ l c c c }
  \multicolumn{1}{r}{} & \multicolumn{1}{c}{j=5} & 
    \multicolumn{1}{c}{j = 6} & \multicolumn{1}{c}{ j = 7} \\
  \cline{2-4}
  j' = 5 & [0,1] & [0,\sfrac{1}{2}] & \emptyset \\
  j' = 6 & [0, 1] & [0,1] & [0,1] \\
  j' = 7 & \emptyset & [\sfrac{1}{2},1] & [0,1] \\
\end{array} \right.
\end{align*}
To read the above tables, recall that as written, $j'$ is the restricting link, and $j$ is the restricted link.
When $j'$ runs out of supply, the restriction intervals in that row become active.
Recall that the total restriction for a single movement $(i,j)$ is the union of all active intervals $\etab_{j',j}^i$.

We will describe the meaning of each element in the table for $\etab_{j',j}^2$ as part of this example.
This set of restriction intervals describe the behavior of partial FIFO blocking on input link 2 when an output link runs out of supply and becomes congested.
The intervals are meant to describe a two-lane road with left (onto link 7), straight (onto link 8), and right (onto link 5) movements.
The left turn movement is allowed on the left lane, the right turn movement is allowed on the right lane, and the straight movement is allowed on both lanes (see Fig \ref{fig:twolanes}).

The interval $\etab_{5,7}^2$ is the empty interval because, when the right turn movement is blocked (due to link 5 running out of supply), the right lane will begin to queue.
However, no vehicles intending to turn left (i.e., to link 7) will be in this lane, since the left turn movement is not allowed in the right lane.
Instead, the left-turning vehicles will all be in the left lane, which will not be blocked from link 7's spillback since no right-turning vehicles will be in the left lane (as the right turn movement is not allowed from the left lane).
Therefore, when the right turn movement is blocked, no left-turning vehicles will be in the blocked lanes, and the left turn movement (link 7) is unaffected by spillback in the right turn movement (link 5), or $\etab_{5,7}^2=\emptyset$.
This argument also applies to $\etab_{7,5}^2$, with the left and right lanes and turn movements switched.

The interval $\etab_{5,8}^2=[\sfrac{1}{2},1]$ encodes the effect of the blocking of the right lane described in the previous paragraph on the through movement.
The through movement is allowed in both lanes, so when the right lane is queueing from spillback from the right turn, the right lane will also have vehicles that are trying to go straight stuck in this queue.
Therefore, the right half-interval, or $[\sfrac{1}{2},1]$, of the through-movement-allowing-lanes will be blocked by queueing to take the right turn when link 5 spills back onto link 2.
This same argument explains how $\etab_{7,8}^2=[0,\sfrac{1}{2}]$, as the left half-interval will be blocked when the left turn spills back.
Note that since the total restriction interval for a movement is the union of all active intervals, this means that when both links 5 and 7 are congested and spill back onto link 2, the total restriction interval for movement $(2,8)$ is $\etab_{5,8}^2 \cup \etab_{7,8}^2 = [0,1]$, which means that the through movement becomes fully blocked when vehicles in link 2 are queueing to take both the left and right turn movements, and queueing in both the left and right lanes.

We have $\etab_{8,5}^2=[0,1]$ and $\etab_{8,7}^2=[0,1]$ to describe the effects of spillback from the through movement, link 8, on the turn movements.
Since the through movement is allowed in both lanes, vehicles trying to enter a blocked lane 8 will queue in both lanes, and all vehicles trying to take the turn movements will also be stuck in these queues.

Finally, the diagonal entries, $\etab_{5,5}^2, \etab_{7,7}^2,$ and $\etab_{8,8}^2$ are $[0,1]$ by definition.
This is because when $j'$ has no supply and is restricting, it is obviously also blocked by its own queue.

Input link 4 is assumed to have the same lane movement rules as link 2, so its restriction intervals are mirrored.
The one-lane input links, 1 and 3, are assumed to operate under full FIFO, as any queue would block the only lane.
That is, $\etab^i_{j',j}=[0,1]$, for all $j'$, $j$ and $i=1,3$.

\subsubsection{Solution}
We will now demonstrate the use of our node flow model solution algorithm to resolve the node flows from these supplies and demands.

First, we compute the oriented demands, with $S_{i,j} = \beta_{i,j} S_i$.

\begin{tabular}{l l l l}
 $S_{1,5}=0$ & $S_{1,6} = 50$ & $S_{1,7}=150$ & $S_{1,8} = 300$ \\ \hline
 $S_{2,5}=100$ & $S_{2,6} = 0$ & $S_{2,7}=300$ & $S_{2,8} = 1600$ \\ \hline
 $S_{3,5}=100$ & $S_{3,6} = 100$ & $S_{3,7}=0$ & $S_{3,8} = 600$ \\ \hline
 $S_{4,5}=100$ & $S_{4,6} = 800$ & $S_{4,7}=800$ & $S_{4,8} = 0.$ \\ \hline
\end{tabular}

We now outline step-by-step how our algorithm proceeds.
The numbers in each iteration $k$ correspond to the numbered steps in the algorithm in Section \ref{subsec_mimo_relaxed}.

\underline{$k=0:$}
\begin{enumerate}
	\item $U_5(0) = \{2,3,4\}; \quad U_6(0) = \{1,3,4\}; \quad U_7(0) = \{1,2,4\}; \quad U_8(0) = \{1,2,3\} $
	\item $ V(0) = \{5,6,7,8\}$
	\item No adjustments to the $\pt_{i}$ are made. The oriented priorities are: \\
		\begin{tabular}{l l l l}
		$\pt_{1,5} = 0$ & $\pt_{1,6} = 100$ & $\pt_{1,7}=300$ & $\pt_{1,8}=600$ \\
			$\pt_{2,5} = 100$ & $\pt_{2,6} = 0$ & $\pt_{2,7}=300$ & $\pt_{2,8}=1600$ \\
			$\pt_{3,5} = 125$ & $\pt_{3,6} = 125$ & $\pt_{3,7}=0$ & $\pt_{3,8}=750$ \\
			$\pt_{4,5} = 118$ & $\pt_{4,6} = 941$ & $\pt_{4,7}=941$ & $\pt_{4,8}=0$
		\end{tabular}
	\item $a_5(0) = 2.92; \quad a_6(0) = 1.72; \quad a_7(0) = 0.649; \quad a_8(0) = 0.678$ \\
		$a_{j^\ast}(0) = a_7(0) = 0.649$
	\item $\Ut(0) = \{1\}, \text{ as } S_1 = 500 \leq \pt_1(0) a_7(0) = 1000 \times 0.649$
		\begin{itemize}
			\item $\boldsymbol{f_{1,6}=S_{1,6}=50;} \quad \boldsymbol{f_{1,7}=S_{1,7}=150;} \quad \boldsymbol{f_{1,8}=S_{1,8}=300;}$
			\item $\Rt_6(1) = 2000-50=1950; \quad \Rt_7(1)=1000-150=850; \quad \Rt_8(1)=2000-300=1700$
			\item $U_5(1) = \{2,3,4\}; \quad U_6(1)=\{3,4\}; \quad U_7(1)=\{2,4\}; \quad U_8(1)=\{2,3\}$
		\end{itemize}
\end{enumerate}
In this iteration, we calculated that output link 7 was the most-demanded output link, and that of its demanding links, only input link 1 would be able to send its full demand before link 7 would run out of supply and become congested.
Input link 1 is able to send its full demand.

\underline{$k=1:$}
\begin{enumerate}
\setcounter{enumi}{1}
	\item $V(1) = \{5,6,7,8\}$
	\item No changes are made to priorities or oriented priorities.
	\item $a_5(1)=2.92; \quad a_6(1) = 1.83; \quad a_7(1) = 0.685; \quad a_8(0) = 0.723$ \\
		$a_{j^\ast}(1) = a_7(1) = 0.685$
	\item $\Ut(1) = \emptyset,$ as $\sum_{j \in V(k)} \St_{2j}(2) = 2000 \nleq \pt_2(1) a_7(1) = 1370,$ \\
	and $ \sum_{j \in V(k)} \St_{4j} =1700 \nleq \pt_4(1) a_7(1) = 1370,$
	\begin{itemize}
		\item $\boldsymbol{f_{2,7} = \pt_{2,7}(1)a_7(1) = 205.5 = \St_{2,7}(2)}$ \\
		  $\boldsymbol{f_{4,7} = \pt_{4,7}(1)a_7(1) = 644.5 = \St_{4,7}(2)}$
		\item $\tilde{\etab}^2_5 (2) = \tilde{\etab}^2_5(1) \cup \etab_{7,5}^2 = \emptyset \cup \emptyset = \emptyset$ \\
		  $\tilde{\etab}^2_8 (2) = \tilde{\etab}_8^2(1) \cup \etab_{7,8}^2 = \emptyset \cup [0, \sfrac{1}{2}] = [0, \sfrac{1}{2}]$ \\
		   $\tilde{\etab}^4_5 (2) = \tilde{\etab}^4_5(1) \cup \etab_{7,5}^4 = \emptyset \cup \emptyset = \emptyset$ \\
		 $\tilde{\etab}^4_6 (2) = \tilde{\etab}^4_6(1) \cup \etab_{7,6}^4 = \emptyset \cup [\sfrac{1}{2}, 1] = [\sfrac{1}{2}, 1]$
		\item $\St_{2,8}(2) = 1600 - 1600( \sfrac{1}{2}) \left( 1 - \frac{205.5}{300} \right) = 1348$ \\
		  $\St_{4,6}(2) = 800 - 800( \sfrac{1}{2}) \left( 1 - \frac{644.5}{800} \right) = 772.25$
		\item	$\Rt_7(2)=850-644.5-205.5=0$
		\item $U_5(2) = \{2,3,4\}; \quad U_6(2)=\{3,4\}; \quad U_7(2)=\emptyset; \quad U_8(2)=\{2,3\}$
	\end{itemize}
\end{enumerate}
In this iteration, we found that, after accounting for input link 1's flow in the previous iteration, output link 7 remains the most-demanded output link.
However, none of the input links will be able to send all of their demand before link 7's supply is exhausted.
Therefore, the remaining supply of link 7 is assigned priority-proportional to input links 2 and 4, and restriction intervals $\etab_{7j}^i$ become active.
The restriction intervals for links 2 and 4 reflect the overlapping lanes of the restricted movement $(i,j)$ with the restricted movement $(i,7)$.
Since $(i,7)$ is a turn movement for $i=2,4$, this means that the through movement has its overlapping half-interval restricted, and the other turn movement is not restricted.
The new values for $\St_{2,8}(2)$ and $\St_{4,6}(2)$ describe the new maximum flow for these movements, the subtracted term representing the vehicles that will be stuck in the queued lane when the congestion in the turn movement spills back.

Since input link 3 had no demand for the movement $(3,7)$, there are no vehicles that will queue for that movement when output link 7 becomes congested, and so there are no restriction intervals that become active for input link 3 at this time.

\underline{$k=2:$}
\begin{enumerate}
\setcounter{enumi}{1}
	\item $V(2) = \{5,6,8\}$
	\item No changes are made to priorities or oriented priorities.
	\item $a_5(2) = 2.82; \quad a_6(2) = 1.83; \quad a_8(2) = 0.723$ \\
		$a_{j^\ast}(2) = a_8(2) = 0.723$
	\item $\Ut(2) = \emptyset,$ as $\sum_{j \in V(k)} \St_{2j}(2) = 205.5 + 1348 = 1553.5 \nleq \pt_2(2) a_8(2) = 1446$, \\
	and $\sum_{j \in V(k)} \St_{3j}(2) = 800 \nleq \pt_3(2) a_8(2) = 723$
	\begin{itemize}
		\item $\boldsymbol{f_{2,8} = \pt_{2,8}(2)a_8(2) = 1157.4 = \St_{2,8}(3)}$ \\
		  $\boldsymbol{f_{3,8} = \pt_{3,8}(2)a_8(2) = 542.6 = \St_{3,8}(3)}$
		\item $\tilde{\etab}^2_5 (3) = \tilde{\etab}^2_5(2) \cup \etab_{8,5}^2 = \emptyset \cup [0,1] = [0,1]$ \\
		   $\tilde{\etab}^3_5 (3) = \tilde{\etab}^3_5(3) \cup \etab_{8,5}^3 = \emptyset \cup [0,1] = [0,1]$ \\
		 $\tilde{\etab}^3_6 (3) = \tilde{\etab}^3_6(2) \cup \etab_{8,6}^3 = \emptyset \cup [0,1] = [0, 1]$
		\item $\St_{2,5}(3) = 100 - 100( 1 ) \left( 1 - \frac{1157.4}{1600} \right) = 72.3$ \\
		  $\St_{3,5}(3) = 100 - 100( 1 ) \left( 1 - \frac{542.6}{800} \right) = 67.8$ \\
		  $\St_{3,6}(3) = 100 - 100( 1 ) \left( 1 - \frac{542.6}{800} \right) = 67.8$
	  \item $\boldsymbol{f_{2,5} = \St_{2,5}(3) = 72.3$} \\
	    $\boldsymbol{f_{3,5} = \St_{3,5}(3) = 67.8$} \\
	    $\boldsymbol{f_{3,6} = \St_{3,6}(3) = 67.8$}
		\item	$\Rt_8(3)=1700-1157.4-542.6=0$ \\
		  $\Rt_5(3) = 1000 - 72.3 - 67.8 = 859.9$ \\
      $\Rt_6(3) = 1950 - 67.8 = 1882.2$
		\item $U_5(3) = \{4\}; \quad U_6(3)=\{4\}; \quad U_8(3)=\emptyset$
	\end{itemize}
\end{enumerate}

This iteration is similar to the previous iteration in that an output link, this time link 8, has its supply exhausted.
However, the two input links under consideration, links 2 and 3, have a full FIFO restriction on their flows when link 8 runs out of supply.
For link 2, this is because $(2,8)$ is the through movement whose queue takes up both lanes as described in Section \ref{subsubsec_exampleeta}.
For link 3, this is because it has only one lane, so any queue blocks all traffic.
The remaining flows for input links 2 and 3 are therefore determined by this FIFO restriction.

\underline{$k=3:$}
\begin{enumerate}
\setcounter{enumi}{1}
	\item $V(3) = \{5,6\}$
	\item No changes are made to priorities or oriented priorities.
	\item $a_5(3) = 7.29; \quad a_6(3) = 2.35$ \\
		$a_{j^\ast}(3) = a_6(3) = 2.35$
	\item $\Ut(3) = \{4\},$ as $\sum_{j \in V(k)} \St_{4j}(3) = 100 + 772.25 = 872.25 \leq \pt_4(3) a_6(3) = 4705.5$,
	\begin{itemize}
		\item $\boldsymbol{f_{4,5} = \St_{4,5}(3) = 100}$ \\
		  $\boldsymbol{f_{4,6} = \St_{4,6}(3) = 772.25}$
		\item $\Rt_5(4) = 859.9 - 100 = 759.9$ \\
      $\Rt_6(4) = 1882.2 - 759.9 = 1122.3$
		\item $U_5(4) = \emptyset; \quad U_6(4)=\emptyset$
	\end{itemize}
\end{enumerate}

This iteration finds the remaining flows.
Note that link 4's through movement, (4,6), is already under effect of partial FIFO, so it is not able to fill its entire demand $S_{4,6}$.

\underline{$k=4:$}
\begin{enumerate}
\setcounter{enumi}{1}
	\item $V(4) = \emptyset$, so the algorithm terminates. All flows have been found.
\end{enumerate}

It is useful to compare the solution to this example problem to that of the example in \citet{tampere11} (Tables \ref{tab:tampere_new} and \ref{tab:tampere_old}).
As noted above, all values for this example are the same as those in \citet{tampere11}, except in the present example we have added the partial FIFO relaxation to some movements.
In \citet{tampere11}, with the full FIFO restriction, iteration 0 proceeded the same, but at iteration 1, when link 7's supply is exhausted, \emph{all} of links 2 and 4's flows were set due to the full FIFO restriction.
Then, the full-FIFO example finds that input link 3 is able to have all of its demand satisfied, as the high amount of spillback from link 7 to links 2 and 4 left more supply available for link 3 in link 8 than in the present example.

\begin{table}[h]
\centering
\begin{tabular}{l r r r r}
\hline
& \multicolumn{1}{c}{$j=5$} & \multicolumn{1}{c}{$j = 6$} & \multicolumn{1}{c}{$j = 7$} & \multicolumn{1}{c}{$j=8$} \\
\cline{2-5}
$i=1$ & 0 & 50 & 150 & 300 \\
$i=2$ & 72.3 & 0 & 205.5 & 1157.4 \\
$i=3$ & 67.8 & 67.8 & 0 & 542.6 \\
$i=4$ & 100 & 772.3 & 644.5 & 0 \\ \hline
\end{tabular}
\caption{Flows $f_{i,j}$ resulting from enforcing partial FIFO as specified in the present example}
\label{tab:tampere_new}
\end{table}

\begin{table}[h]
\centering
\begin{tabular}{l r r r r}
\hline
& \multicolumn{1}{c}{$j=5$} & \multicolumn{1}{c}{$j = 6$} & \multicolumn{1}{c}{$j = 7$} & \multicolumn{1}{c}{$j=8$} \\
\cline{2-5}
$i=1$ & 0 & 50 & 150 & 300 \\
$i=2$ & 68.5 & 0 & 205.5 & 1096 \\
$i=3$ & 100 & 100 & 0 & 600 \\
$i=4$ & 80.6 & 644.5 & 644.5 & 0 \\ \hline
\end{tabular}
\caption{Flows $f_{i,j}$ resulting from enforcing full FIFO (from \citet[Table 8]{tampere11})}
\label{tab:tampere_old}
\end{table}

In other words, if this example is meant to model a junction where two input links (2 and 4) have two lanes and twice the supply as the other two, a full FIFO construction may have led to the ``unrealistic spillback'' phenomenon we discussed in Section \ref{subsec_fifo_implications}, with queueing for a turning movement causing spillback into through movements and non-competing turning movements.
Further, as this example demonstrates, an overly aggressive spillback model can lead to knock-on effects in differences in other input links' flows, as shown by link 3 being able to send all its supply when the two-lane links suffer full FIFO from a turn movement queue, but is not able to when the two-lane links have only a partial FIFO restriction.

\subsection{Example Two: Variation of input links' priorities}
Consider the node schematically presented in Figure \ref{fig-examplehov}.
This node has three input links (links 1 through 3) and two output links (4 and 5).
This node might represent an onramp (link 3) joining a freeway (links 1 and 4) with a parallel managed lane facility (links 2 and 5).
In this situation, we will have two vehicle commodities, with $c=1$ representing vehicles that cannot enter the managed lane facility, and $c=2$ representing vehicles that may, but are not required to, enter the managed lane facility.

\begin{figure}[htb]
\centering
\includegraphics[width=1.5in]{./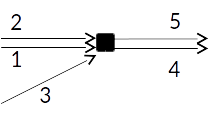}
\caption{Three-input, two-output node used in Example Two.}
\label{fig-examplehov}
\end{figure}

Let us define our parameters as follows:

\begin{tabular}{l l l l}
 $C=2$ & & & \\
 \hline
 $S_1^1=1700$ & $S_2^1=0$ & $S_3^1=400$ & \\
 $S_1^2=200$ & $S_2^2=500$ & $S_3^2=200$ &\\
 $F_1 = 4000$ & $F_2 = 2000$ & $F_3 = 1000$ &\\
 \hline
 $R_4=2000$ & $R_5=1000$ & &\\
 \hline
 $\beta_{1,4}^1 = 1$ & $\beta_{1,5}^1 = 0$ & $\beta_{1,4}^2 = 0.2$ & $\beta_{1,5}^2 = 0.8$ \\
 $\beta_{2,4}^1 = 1$ & $\beta_{2,5}^1 = 0$ & $\beta_{2,4}^2 = 0.1$ & $\beta_{2,5}^2 = 0.9$ \\
 $\beta_{3,4}^1 = 1$ & $\beta_{3,5}^2 = 0$ & $\beta_{3,4}^2 = 0.5$ & $\beta_{3,5}^2 = 0.5$ \\ \hline
 \multicolumn{4}{l}{$\etab_{j',j}^i = [0,1]$ for all $i,j,j'$ (Full FIFO).} \\ \hline
\end{tabular}.

To demonstrate the effect of varying input link priorities on node flows, we will use three different priority-assignment schemes with the above values.
This also demonstrates the usefulness of our node model in studying the effect of priorities on node flows: 
For the sake of brevity, we will not show intermediate calculations here as was done in the previous example, but simply show and discuss the final node flows.

\textbf{Capacity-proportional priorities:} \citet{tampere11} recommended assigning downstream supply among the node flows in proportion to the input links' capacities.
As we mentioned in Remark \ref{remark:tampere}, setting the priorities in our node model equal to capacities and using strict FIFO recovers the \citet{tampere11} node model as a special case.
So, setting $p_i=F_i$ for $i=1,2,3$ and calculating node flows, we obtain the flows shown in Table \ref{tab:capacityprop}.

\begin{table}[h]
\centering
\begin{tabular}{l l r r}
  \hline
  & & \multicolumn{1}{c}{$j=4$} & \multicolumn{1}{c}{$j=5$} \\
  \cline{3-4}
  $i=1$ & $c = 1$ & 1552.1 & 0 \\
  & $c=2$ & 36.52 & 146.1 \\
  $i=2$ & $c = 1$ & 0 & 0 \\
  & $c=2$ & 50 & 450 \\
  $i=3$ & $c = 1$ & 289.1 & 0 \\
  & $c=2$ & 72.28 & 72.28 \\
  \hline \hline
  \multicolumn{2}{l}{$R_j - \sum_i \sum_c f_{i,j}^c$} & 0 & 331.6 \\
  \hline
\end{tabular}
\caption{Flows $f_{i,j}^c$ from Example Two with capacity-proportional priorities}
\label{tab:capacityprop}
\end{table}

Here, the GP link ($j=4)$ had its supply filled, while the managed lane link ($j=5$) had some leftover supply.
The input GP link ($i=1$) and onramp link ($i=3$) both had some unfulfilled special ($c=2$) vehicles that could have taken this leftover supply in the managed lane link, but they were restricted by strict FIFO from link 1's supply exhausting.
Note that all of the demand of the input managed lane link ($i=2$) was able to be satisfied.
This will be in contrast to the next set of priorities:

\textbf{Demand-proportional priorities:} As we mentioned in Remark \ref{remark:demandpriorities} and elsewhere, in practice one should not assign priorities in proportion to the input links' demand, as this would produce flows that would violate nodal invariance principles \citep{tampere11}.
However, for the purpose of this example, we will use priorities equal to the total demands purely for demonstration purposes, so that the resulting flows can be contrasted with other priority examples.

Setting $p_1 = 1900, p_2 = 500, p_3 = 600$, and calculating node flows, we obtain the flows shown in Table \ref{tab:demandprop}.

\begin{table}[h]
\centering
\begin{tabular}{l l r r}
  \hline
  & & \multicolumn{1}{c}{$j=4$} & \multicolumn{1}{c}{$j=5$} \\
  \cline{3-4}
  \multirow{1}{*}{$i=1$} & $c = 1$ & 1484.7 & 0 \\
  & $c=2$ & 34.93 & 139.7 \\
  \multirow{1}{*}{$i=2$} & $c = 1$ & 0 & 0 \\
  & $c=2$ & 43.67 & 393.0 \\
  \multirow{1}{*}{$i=3$} & $c = 1$ & 349.3 & 0 \\
  & $c=2$ & 87.33 & 87.33 \\
  \hline \hline
  \multicolumn{2}{l}{$R_j - \sum_i \sum_c f_{i,j}^c$} & 0 & 379.9 \\
  \hline
\end{tabular}
\caption{Flows $f_{i,j}^c$ from Example Two with demand-proportional priorities}
\label{tab:demandprop}
\end{table}

Note that, while similar to the previous example in that link $j=4$ has its supply exhausted and link $j=5$ has some extra $c=2$ special vehicles that cannot reach it, this example is different from the previous example in that the input managed lane link ($i=2$) is not able to send all of its vehicles.
Vehicles exiting this link encounter a queue entering link $j=4$ when that link fills, which blocks vehicles entering link $j=5$ due to this example's strict FIFO construction.
This blockage leads to more spillback and less flow through the node, as evidenced by the higher amount of leftover supply in link $j=5$.

\textbf{Onramp-preference priorities:} Some authors (e.g., \citet{coogan2014freeway}, \citet{gomes06})  have suggested that the particular problem of a relatively small onramp merging with a larger freeway should be modeled such that vehicles from the onramp are able to claim any available downstream supply \emph{first}, so that a queue would always appear on the mainline before the onramp.
In addition, this type of node model at onramp junctions is implicit in many ramp metering control schemes, from classic schemes such as ALINEA \citep{alinea91} to recent neural-network-based reinforcement learning approaches (e.g., \citet{rezaee2014decentralized, belletti2017expert}).
In particular, a ramp metering control problem often defines the onramp \emph{flow} as an exogenous input that is controllable by metering, rather than the onramp \emph{demand} (that is, the freeway is assumed to always accept all demand that the meter sends from the onramp).
In our node model, we can create this special case by proper initialization of the priorities.
In particular, setting $p_1 = p_2 = 0$ and $p_3 > 0$, our node model will assign all necessary supply to the onramp on its first iteration.
Then, on subsequent iterations, in \eqref{eq_nonzero_priorities} input links $i=1,2$ will have nonzero priority and be able to claim the leftover supply.
Following this procedure, we obtain the flows shown in Table \ref{tab:actm}.

\begin{table}[h]
\centering
\begin{tabular}{l l r r}
  \hline
  & & \multicolumn{1}{c}{$j=4$} & \multicolumn{1}{c}{$j=5$} \\
  \cline{3-4}
  \multirow{1}{*}{$i=1$} & $c = 1$ & 1416.7 & 0 \\
  & $c=2$ & 33.33 & 133.3 \\
  \multirow{1}{*}{$i=2$} & $c = 1$ & 0 & 0 \\
  & $c=2$ & 50 & 450 \\
  \multirow{1}{*}{$i=3$} & $c = 1$ & 400 & 0 \\
  & $c=2$ & 100 & 100 \\
  \hline \hline
  \multicolumn{2}{l}{$R_j - \sum_i \sum_c f_{i,j}^c$} & 0 & 316.7 \\
  \hline
\end{tabular}
\caption{Flows $f_{i,j}^c$ from Example Two with onramp-preference priority scheme}
\label{tab:actm}
\end{table}

Note that we have indeed satisfied all demand of the onramp link ($i=3$) with this priority scheme.
In this particular example, we have $p_1 = p_2 = 0.5$ when we reset their priorities, as prescribed in \eqref{eq_nonzero_priorities}.
However, a different version of \eqref{eq_nonzero_priorities} that resets the priorities to nonequal values (e.g., $p_i = F_i$) is possible. 
In this case, assigning higher priority to input link $i=1$ may prevent $i=2$ from sending its full demand.
This further demonstrates the variability of node flows to the priority-assignment scheme.

In summary, Example Two shows that input priorities determine not only define the upstream directions of queue formations,
but can also affect the node throughput.

\section{Conclusion}\label{sec_conclusion}
This paper discussed and addressed several node-rooted modeling issues that the authors have encountered in
macroscopic simulation of large and/or high-dimensional road networks.
First, the node model framework of~\citet{tampere11} was extended to the case of arbitrary
input priorities, with discussion of what input priorities might represent.
Our node model also allows priorities to have zero values, and, in the style of the literature,
is formulated as a mathematical optimization problem.

Second, we discussed a tradeoff between unrealistic spillback behavior and a need to model a road with multiple links, and traced this tradeoff back to the first-in-first-out (FIFO) constraint that is present in common macroscopic node models.
To resolve this dilemma, we proposed a generalization of the FIFO constraint that is easily applicable to node models of the common~\citet{tampere11} framework.
In addition, our FIFO relaxation is intuitive due to the ability to illustrate its effects with simple geometrical shapes.

We believe our node model, and our relaxed FIFO construction, is widely applicable across link models.
Like any node model in the~\citet{tampere11} framework, it can be applied with any link model that defines supplies, demands, and some form of priority.
In addition, while this paper's discussion of the node flow problem strictly deals with discrete-time simulations (i.e., at some timestep $t$, some fixed supplies $S_{i,j}^c(t)$ and demands $R_j(t)$ are given, and the node model computes some set of throughflows $f_{i,j}^c$ at that timestep), which, like the classic CTM are usually of fixed temporal step size, we have written a companion paper~\citep{wright_node_dynamic_2016} where the flows computed by this paper's node model are shown to be equal to those computed by a particular hybrid dynamic system. 
In the dynamic system setting, the flows $f_{i,j}^c$ are given in terms of a time differential, $\frac{d}{dt} f_{i,j}^c(t)$.
Thus, this paper's node model (and all other node models of the~\citet{tampere11} framework, of which this paper's node model is a generalization) and its relaxed FIFO construction should be applicable in simulations with continuous-time link models (where the supplies and demands are given by differential equations) or variable-step-size simulations such as in~\citet{raadsen_efficient_2016}.

As a final note, we want to emphasize the usefulness of our generalization of the \citet{tampere11}-style ``general class of node models'' framework to multi-commodity flows.
Many other works in the node model literature (e.g., \citet{gibb_model_2011, corthout_non-unique_2012, smits_family_2015, jabari_node_2016}\footnote{
Note, however, that \citet{jabari_node_2016}, in order to specify the non-concurrence of conflicting turn movements, defined different movements as different commodities with their own fundamental diagrams.
}), but as noted, we considered multi-commodity flows, with several commodities in the same link and taking the same movement, in this paper.
It turns out that our expression of multi-commodity flows, while simple (Requirement 8 in Section \ref{subsec_node_review}), can be quite powerful.
In a forthcoming paper \citep{wright2017generic}, we make use of this construction, and give each commodity its own fundamental diagram, to extend this paper's node model to the second order of macroscopic traffic models.

All of our results have been presented in the form of constructive
computational algorithms that are readily implementable in
macroscopic traffic simulation software.

\section*{Acknowledgements}\label{sec_acknowledgement}
We would like to express great appreciation to our colleagues
Elena Dorogush and Ajith Muralidharan for sharing ideas,
Ramtin Pedarsani, Brian Phegley and Pravin Varaiya for their
critical reading and their help in clarifying some theoretical issues.

This research was funded by the California Department of Transportation.

\pagebreak
\appendix
\section{Notation}\label{app_notation}
\begin{longtable}[l]{l p{4.5in} l}
	Symbol & Definition \\ 
	\hline
	$k$ & Iteration index \\ 
	$i$ & Index of links entering a node \\ 
	$M$ & Number of links entering a node \\ 
	$j$ & Index of links exiting a node \\ 
	$N$ & Number of links exiting a node \\ 
	$c$ & Vehicle commodity index \\ 
	$C$ & Number of vehicle commodities \\ 
	$S_i^c$ & Demand for commodity $c$ of link $i$ \\ 
	$S_i$ & Total demand for input link $i$, $S_i = \sum_{c=1}^C S_i^c$ \\ 
	$R_j$ & Supply of link $j$ \\ 
	$p_i$ & Priority of input link $i$ \\ 
	$f_{i,j}^c$ & Flow of vehicle commodity $c$ from link $i$ to link $j$ \\ 
	$\etab_{j,j'}^i$ & Mutual restriction interval of link $j$ onto link $j'$ for link $i$ \\ 
	$U(k)$ & Set of input links whose flows have yet to be fully determined as of iteration $k$ \\ 
	$V(k)$ & Set of output links whose flows have yet to be fully determined as of iteration $k$ \\ 
	$\St_i^c(k)$ & Adjusted demand for commodity $c$ of link $i$ as of iteration \\ 
	$\St_i(k)$ & Total adjusted demand for input link $i$ as of iteration $k$ \\ 
	$\beta_{i,j}^c$ & Split ratio of $c$ vehicles from link $i$ to link $j$ \\ 
	$\pt_i(k)$ & Adjusted priority of link $i$ at iteration $k$ \\ 
	$\Rt_j(k)$ & Adjusted supply of link $j$ at iteration $k$ \\ 
	$U_j(k)$ & Set of input links contributing to link $j$ whose flows are undetermined as of iteration $k$ \\ 
	$\pt_{i,j}(k)$ & Oriented priority from link $i$ to $j$ at iteration $k$ \\ 
	$a_j(k)$ & Restriction term of link $j$ at iteration $k$ \\ 
	$a_{j^\ast}(k)$ & Smallest (most restrictive) restriction term at iteration $k$ \\ 
	$\alpha_j(k)$ & Reduction factor of link $j$ at iteration $k$ \\ 
	$\Ut(k)$ & Set of input links whose demand can be fully met by downstream links at iteration $k$ \\ 
\end{longtable}

\section{MISO node solution algorithm}\label{app_merge}
\begin{enumerate}
\item Initialize:
\begin{eqnarray*}
\Rt_1(0) & := & R_1; \\
U(0) & := & \left\{1,\dots,M\right\}; \\
k & := & 0.
\end{eqnarray*}
Here, $k$ is the iteration index;
$\Rt_1(k)$ is the remaining supply of the output link at iteration $k$; and
$U(k)$ is the set of still unprocessed input links at iteration $k$:
input links whose input-output flows have not been assigned yet.

\item Check that at least one of the unprocessed input links has
nonzero priority, otherwise, assign equal positive priorities
to all the unprocessed input links:
\begin{equation*}
\pt_i(k) = \left\{\begin{array}{ll}
p_i, &
\mbox{ if there exists } i'\in U(k):\; p_{i'} > 0,\\
\frac{1}{|U(k)|}, &
\mbox{ otherwise},
\end{array}\right.
\end{equation*}
where $|U(k)|$ denotes the number of elements in set $U(k)$.

\item Define the set of input links
that want to send fewer vehicles than their allocated supply
and whose flows are still undetermined:
\begin{equation*}
\Ut(k) = \left\{i\in U(k):\; \sum_{c=1}^C S_i^c\leq\pt_i(k)
\frac{\Rt_1(k)}{\sum_{i'\in U(k)}\pt_{i'}(k)} \right\}.
\end{equation*}
\begin{itemize}
\item If $\Ut(k) \neq\emptyset$, assign:
\begin{eqnarray*}
f_{i1}^c & = & S_i^c, \;\;\; i\in\Ut(k); \\
\Rt_1(k+1) & = & \Rt_1(k) - \sum_{i\in\Ut(k)}\sum_{c=1}^C f_{i,1}^c; \\
U(k+1) & = & U(k) \setminus \Ut(k).
\end{eqnarray*}
\item Else, assign:
\begin{eqnarray*}
f_{i,1}^c & = & S_i^c \frac{\pt_i(k)}{\sum_{i'\in U(k)}\pt_{i'}(k)}
\frac{\Rt_1(k)}{S_i},
\;\;\; i\in U(k); \\
U(k+1) & = & \emptyset .
\end{eqnarray*}
\end{itemize}

\item If $U(k+1)=\emptyset$, then stop.

\item Set $k:=k+1$, and return to step 2.
\end{enumerate}

This algorithm finishes after no more than $M$ iterations, and
in the special case of $M=2$ it reduces to
\begin{eqnarray}
f_{1,1}^c & = & \min\left\{S_1^c, \;\; S_1^c\frac{\max\left\{
\frac{\pt_1}{\pt_1+\pt_2}R_1, \;\; R_1-S_2\right\}}{S_1}\right\};
\label{eq_merge_21_1}\\
f_{2,1}^c & = & \min\left\{S_2^c, \;\; S_2^c\frac{\max\left\{
\frac{\pt_2}{\pt_1+\pt_2}R_1, \;\; R_1-S_1\right\}}{S_2}\right\},
\label{eq_merge_21_2}
\end{eqnarray}
with $\pt_i$ computed per step 2.

\bibliographystyle{abbrvnat}
\bibliography{../../../traffic}

\end{document}